\newif\ifarxiv
\newtheorem{lem}{Lemma}
\newtheorem{theorem}{Theorem}
\newtheorem{remk}{Remark}
\newtheorem{prop}{Proposition}
\newtheorem{corol}{Corollary}
\newcommand{\ba}{\mathbf{a}}
\newcommand{\e}{\mathsf{e}}
\newcommand{\jj}{\mathsf{j}}
\newcommand{\bh}{\mathbf{h}}
\newcommand{\bH}{\mathbf{H}}
\newcommand{\bR}{\mathbf{R}}
\newcommand{\bx}{\mathbf{x}}
\newcommand{\bX}{\mathbf{X}}
\newcommand{\bn}{\mathbf{n}}
\newcommand{\by}{\mathbf{y}}
\newcommand{\bI}{\mathbf{I}}
\newcommand{\bu}{\mathbf{u}}
\newcommand{\bU}{\mathbf{U}}
\newcommand{\bq}{\mathbf{q}}
\newcommand{\Var}{\mathbb{V}}
\newcommand{\Cov}{\mathsf{Cov}}
\newcommand{\Herm}{\mathsf{H}}
\newcommand{\Trans}{\mathsf{T}}
\newcommand{\x}{\mathsf{x}}
\newcommand{\y}{\mathsf{y}}
\newcommand{\z}{\mathsf{z}}
\newcommand{\dd}{\mathsf{d}}
\newcommand{\bigo}{\mathcal{O}}
\newcommand{\defeq}{\triangleq}
\newcommand{\Ex}{\mathbb{E}}
\newcommand{\kk}{\kappa}
\newcommand{\erfi}{\operatorname{erfi}}
\def\bzero{\boldsymbol{0}}
\def\LOS{\mathrm{LOS}}
\def\Re{\mathrm{Re}}
\def\Im{\mathrm{Im}}
\def\SINR{\mathrm{SINR}}
\def\nLOS{\mathrm{NLOS}}
\def\tx{\mathrm{tx}}
\def\rx{\mathrm{rx}}
\def\vtx{\mathrm{vtx}}
\def\vrx{\mathrm{vrx}}
\def\NS{\mathrm{NS}}
\def\sinc{\mathrm{sinc}}
\def\max{\mathrm{max}}
\def\NF{\mathrm{NF}}
\def\refl{\mathrm{rfl}}
\def\scat{\mathrm{scr}}
\def\Cset{\mathbb{C}}
\def\Rset{\mathbb{R}}
\def\Uset{\mathbb{U}}
\def\sCN{\mathcal{CN}}
\def\Nset{\mathcal{N}}
\def\Uset{\mathcal{U}}
\newacronym{NS}{NR}{non-ideal reflector}
\newacronym{RIS}{RIS}{reconfigurable intelligent surface}
\newacronym{QoS}{QoS}{quality of service}
\newacronym{LC}{LC}{liquid crystal}
\newacronym{SNR}{SNR}{signal-to-noise ratio}
\newacronym{SINR}{SINR}{signal-to-interference-and-noise ratio}
\newacronym{SIR}{SIR}{signal to interference ratio}
\newacronym{SMR}{SMR}{side lobe to main lobe ratio}
\newacronym{TDMA}{TDMA}{time-division multiple-access}
\newacronym{FDMA}{FDMA}{frequency-division multiple-access}
\newacronym{SDMA}{SDMA}{space-division multiple-access}
\newacronym{BS}{BS}{base station}
\newacronym{MU}{UE}{user equipment}
\newacronym{NF}{NF}{near-field}
\newacronym{FF}{FF}{far-field}
\newacronym{Tx}{Tx}{transmitter}
\newacronym{Rx}{Rx}{receiver}
\newacronym{AWGN}{AWGN}{additive white Gaussian noise}
\newacronym{wrt}{w.r.t.}{with respect to}
\newacronym{RDE}{RDE}{reaction-diffusion equation}
\newacronym{PDE}{PDE}{partial differential equation}
\newacronym{UPA}{UPA}{uniform planar array}
\newacronym{ULA}{ULA}{uniform linear array}
\newacronym{AO}{AO}{alternative optimization}
\newacronym{SOCP}{SOCP}{second-order cone programming}
\newacronym{AoD}{AoD}{angle of departure}
\newacronym{AoA}{AoA}{angle of arrival}
\newacronym{PIN}{PIN}{positive-intrinsic-negative}
\newacronym{RF}{RF}{radio frequency}
\newacronym{MEMS}{MEMS}{micro-electro-mechanical system}
\newacronym{LOS}{LOS}{line-of-sight}
\newacronym{nLOS}{NLOS}{non-LOS}
\newacronym{MISO}{MISO}{multi-input single-output}
\newacronym{MIMO}{MIMO}{multiple-input multiple-output}
\newacronym{CSI}{CSI}{channel state information}
\newacronym{PDF}{PDF}{probability density function}
\newacronym{PCD}{PCD}{parallel coordinate descent}
\newacronym{SS}{SS}{surface scattering}
\newacronym{RV}{RV}{random variable}
\newacronym{HF}{HF}{Huygens-Fresnel}
\newacronym{SR}{SR}{surface reflector}
\newacronym{CLT}{CLT}{central limit theorem}
\newacronym{iid}{IID}{independent and identically distributed}
\newacronym{mmWave}{mmWave}{millimeter wave}
\newacronym{XL}{XL}{extremely large}
\newacronym{GD}{GD}{Gaussian distribution}
\title{Near-Field Multipath MIMO Channels:\\ Modeling Reflectors and Exploiting NLOS Paths}
\author{
\IEEEauthorblockN{Mohamadreza Delbari$^{\orcidlink{0000-0002-4768-5874}}$, \textit{Student Member, IEEE}, 
George C. Alexandropoulos$^{\orcidlink{0000-0002-6587-1371}}$, \textit{Senior Member, IEEE},\\
Robert Schober$^{\orcidlink{0000-0002-6420-4884}}$, \textit{Fellow, IEEE}, 
H. Vincent Poor$^{\orcidlink{0000-0002-2062-131X}}$, \textit{Life Fellow, IEEE},  
and Vahid Jamali$^{\orcidlink{0000-0003-3920-7415}}$, \textit{Senior Member, IEEE}\\
}
\IEEEauthorblockA{
\thanks{The work of M. Delbari and V. Jamali was supported in part by the Deutsche Forschungsgemeinschaft (DFG, German Research Foundation) within the Collaborative Research Center MAKI (SFB 1053, Project-ID 210487104), in part by the LOEWE initiative (Hesse, Germany) within the emergenCITY Centre under Grant LOEWE/1/12/519/03/05.001(0016)/72, and in part by the German Federal Ministry for Research, Technology and Space (BMFTR) under the program of ``Souverän. Digital. Vernetzt.'' joint project Open6GHub plus (Project-ID 16KIS2407). The work of G. C. Alexandropoulos was supported by the SNS JU project TERRAMETA under Grant 101097101. The work of R. Schober was funded by the German Federal Ministry for Research, Technology and Space (BMFTR) under the program of ``Souverän. Digital. Vernetzt.'' joint project 6G-RIC (Project-ID 16KISK023) and the Deutsche Forschungsgemeinschaft (DFG, German Research Foundation) under projects SFB 1483 (Project-ID 442419336, EmpkinS). The work of H. V. Poor was supported in part by the U.S National Science Foundation under Grants CNS-2128448 and ECCS-2335876. Part of this paper was presented at the IEEE GLOBECOM 2024 [DOI: \href{https://ieeexplore.ieee.org/document/11101277}{10.1109/GCWkshp64532.2024.11101277}] in~\cite{delbari2024nearfield}.
(Corresponding author: Mohamadreza Delbari).}
\thanks{M. Delbari and V. Jamali are with the Resilient Communication Systems Laboratory, Technische Universität Darmstadt, 64283 Darmstadt, Germany (e-mail: mohamadreza.delbari@tu-darmstadt.de; vahid.jamali@tu-darmstadt.de).} 
\thanks{ G. C. Alexandropoulos is with the Department of Informatics and Telecommunications, National and Kapodistrian University of Athens, 16122 Athens, Greece and the Department of Electrical and Computer Engineering, University of Illinois Chicago, Chicago, IL 60601, USA (e-mail: alexandg@di.uoa.gr).} 
\thanks{R. Schober is with the Institute for Digital Communication, Friedrich-Alexander-Universität Erlangen-Nürnberg, 91052 Erlangen, Germany (e-mail: robert.schober@fau.de).}
\thanks{H. V. Poor is with the Department of Electrical and Computer Engineering, Princeton University, Princeton, NJ 08544 USA (email: poor@princeton.edu).}
}
}
\begin{document}

\maketitle
\begin{abstract}
\Gls{NF} communications is receiving renewed interest in the context of \gls{MIMO} systems involving large physical apertures with respect to the signal wavelength. While \gls{LOS} links are typically expected to dominate in \gls{NF} scenarios, the impact of \gls{nLOS} components at both in centimeter- and millimeter-wave frequencies  may be in general non-negligible. Moreover, although weaker than the \gls{LOS} path, \gls{nLOS} links may be essential for achieving multiplexing gains in \gls{MIMO} systems. The commonly used \gls{NF} channel models for \gls{nLOS} links in the literature are based on the point scattering assumption, which is not valid for large reflectors such as walls, ceilings, and the ground. In this paper, we develop a generalized statistical \gls{NF} \gls{MIMO} channel model that extends the widely adopted point scattering framework to account for imperfect reflections from large surfaces. This model is then leveraged to investigate how the physical characteristics of these reflectors influence the resulting \gls{NF} \gls{MIMO} channel. In addition, using the proposed channel model, we analytically demonstrate for a multi-user scenario that, even when users are located within the \gls{NF} regime, relying solely on \gls{LOS} \gls{NF} links may be insufficient to achieve multiplexing gains, thus exploiting \gls{nLOS} links becomes essential. Our simulation results validate the accuracy of the proposed model and show that, in many practical settings, the contribution of \gls{nLOS} components is non-negligible and must be carefully accounted for in the system design.

\begin{IEEEkeywords}
Near-field, MIMO channel modeling, line of sight, multi-user communications, point scattering, surface reflector.
 \end{IEEEkeywords}
\end{abstract}
\glsresetall

\section{Introduction}
\label{Introduction}
The growing demand for higher data rates is pushing wireless systems toward higher frequency bands, such as centimeter and \gls{mmWave}, to exploit the abundant available bandwidth. However, a fundamental challenge at these frequencies is the significant propagation loss. To address this and maintain a reliable link budget, large antenna arrays can be exploited to enable the high beamforming gains required for reliable communications \cite{bjornson2020power,Xu2024,Shakya2024}. The combination of shorter wavelengths and larger antenna apertures greatly increases the Fraunhofer distance, thereby pushing a significant portion of the communication range into the \gls{NF} region \cite{Gong2024}. In this region, the conventional far-field plane-wave approximation becomes inaccurate, necessitating the adoption of more precise spherical-wave channel models.

Several studies have explored \gls{MIMO} wireless systems operating in the \gls{NF} regime; see \cite{liu2023near} for a comprehensive tutorial and \cite{Wei2023} for a tri-polarized channel model. For instance, under a \gls{LOS} \gls{NF} channel model, beamforming and localization have been investigated in \cite{delbari2025fast,Wei2022} and \cite{Ebadi2025,Gavras2025}, respectively. Furthermore, the authors of \cite{ramezani2023near} demonstrated that optimal multiplexing is possible in the \gls{NF} regime even under \gls{LOS} conditions. However, \gls{LOS} links are not always available, e.g., due to self-blockage. Consequently, \gls{nLOS} links must also be exploited to improve multiplexing gains in multi-user \gls{MIMO} systems. An \gls{NF} \gls{nLOS} channel model based on point scattering was proposed in \cite{liu2023near,lu2023near}. Nonetheless, experimental measurements indicate that dominant \gls{nLOS} components are often caused by large surfaces in the environment, such as walls, ceilings, and ground, which all possess substantial electrical apertures \cite{jansen2011diffuse,sheikh2021scattering}. We refer to such surfaces as \glspl{NS}; these \glspl{NS} cannot be accurately modeled as point scatterers. In particular, the interaction of electromagnetic waves with \glspl{NS} may give rise to both specular and non-specular (i.e., scattered) components.

To the best of the authors' knowledge, a statistical \gls{NF} \gls{MIMO} channel model accounting for \glspl{NS} has not yet been reported in the literature. Addressing this gap is the main focus of this paper. The key contributions are summarized as follows.
\begin{itemize}
    \item We develop a novel \gls{NF} \gls{MIMO} channel model that captures the impact of \gls{nLOS} propagation paths resulting from imperfect reflections at large and rough \glspl{NS}. The model parameters are characterized as functions of the surface roughness variance, carrier frequency, \gls{AoD}, and \gls{AoA}. Unlike the conference version of this paper's framework \cite{delbari2024nearfield}, we also incorporate the effect of surface-length correlation to provide a more realistic characterization of the channel behavior.
    \item The \gls{nLOS} channel matrix is decomposed into the sum of a deterministic component (i.e., the channel mean) and a stochastic component. Using principles from geometric optics, we demonstrate that, regardless of the roughness variance of the \gls{NS}, the deterministic component always corresponds to the channel matrix of a virtual \gls{LOS} link created by an ideal \gls{SR}. Furthermore, by applying the multidimensional \gls{CLT}, we show that the stochastic component follows a multivariate Gaussian distribution. Moreover, we derive theoretical expressions for the covariance matrix of this distribution for several special cases. This stochastic term accounts for the effect of \gls{SS}. Unlike \cite{delbari2024nearfield}, we also derive a theoretical expression quantifying the impact of the surface-length correlation on the channel power gains.
    \item The proposed model provides several insights for system design. In particular, we reveal how both the \gls{LOS} and deterministic \gls{nLOS} components of the \gls{NF} channel depend on the positions of the \glspl{Rx} and \glspl{Tx} or their virtual positions mirrored on the \gls{NS} in the three-dimensional space. These insights highlight the importance of \gls{NF} beam focusing to support both \gls{LOS} and \gls{nLOS} links for reliable \gls{NF} communications. Furthermore, we present a generalized \gls{NF} \gls{MIMO} Rician channel model that, while simplified, captures all essential features of the channel.
    \item Using the proposed \gls{NF} model, we analytically quantify how exploiting \glspl{NS} can enhance the spatial diversity of the channel. This facilitates the application of \gls{SDMA} to serve concurrently multiple \glspl{MU}. In particular, our results quantify the trade-off between exploiting \gls{nLOS} and \gls{LOS} links in a challenging case study involving a \gls{BS} serving two \glspl{MU}, which have limited resolvability if only the \gls{LOS} links are considered.
    \item We verify the accuracy of the proposed channel models in a scenario involving a \gls{Tx}, an \gls{Rx}, and an \gls{NS} using numerical evaluations and validate their statistical characteristics based on an extensive set of simulations. In addition, we demonstrate that \gls{nLOS} links are essential for achieving multiplexing gain with \gls{SDMA} in multi-user scenarios.
\end{itemize}

The remainder of this paper is organized as follows. In Section~\ref{sec: System and Channel Models}, we describe the overall system setup and briefly introduce both existing and the proposed channel models. Section~\ref{sec: Channel Model for Non-ideal Surface Reflection} then provides a detailed formulation of our proposed channel model, followed by theoretical analysis for a special case in Section~\ref{sec: When Exploiting nLOS Paths Is Beneficial?}. Simulation results are provided in Section~\ref{Performance Evaluation}. Finally, Section~\ref{sec: Conclusion} concludes the paper.

\textit{Notation:} Bold capital and small letters are used to denote matrices and vectors, respectively.  $(\cdot)^\Trans$ and $(\cdot)^\Herm$ denote the transpose and Hermitian, respectively. Moreover,
$[\bX]_{m,n}$ and $[\bx]_{n}$ denote the element in the $m$th row and $n$th column of matrix $\bX$ and the $n$th entry of vector $\bx$, respectively. $\mathcal{CN}(\boldsymbol{\mu},\boldsymbol{\Sigma})$ and  $\Nset(\boldsymbol{\mu},\boldsymbol{\Sigma})$ denote complex and real Gaussian random vectors with mean vector $\boldsymbol{\mu}$ and covariance matrix $\boldsymbol{\Sigma}$, respectively. $\Ex\{\cdot\}$, $\Var\{\cdot\}$, and $\Cov\{\cdot,\cdot\}$ represent expectation, variance, and covariance, respectively. Finally, $o(\cdot)$, $\bigo(\cdot)$, $\mathbb{R}$, and $\mathbb{C}$ represent the little-o notation, the big-O notation, the set of real numbers, and the set of complex numbers, respectively. 

\section{System and Channel Models}
\label{sec: System and Channel Models}
\begin{figure}[t]
    \centering
    \includegraphics[width=0.3\textwidth]{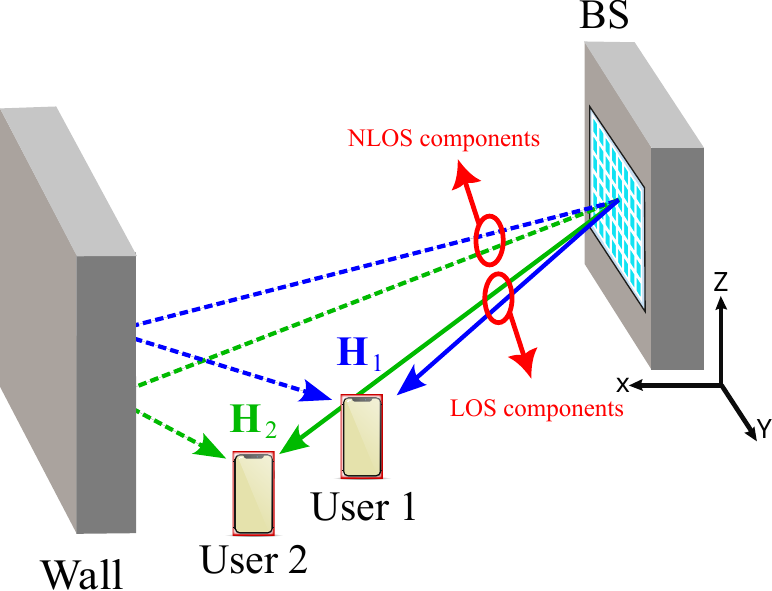}
    \caption{Schematic illustration of a downlink $K$-user wireless communication system for $K=2$ \glspl{MU}, whose channels comprise both LOS and NLOS links.}
    \label{fig:system model}
    \vspace{-0.5cm}
\end{figure}
In this section, we first present the considered system model. Subsequently, we introduce the existing \gls{MIMO} \gls{NF} channel model before presenting our proposed mixed-scattering \gls{NF} \gls{nLOS} channel model.

\subsection{System Model}
\label{sec: System model}
We consider a narrowband downlink communication system comprising a \gls{BS} with $N_t$ antenna elements and $K$ \glspl{MU}, each equipped with $N_r$ antenna elements, as shown in Fig.~\ref{fig:system model}. The received signal vector for the $k$th \gls{MU} ($k = 1, \dots, K$), $\by_k\in\Cset^{N_r}$, is expressed as
\begin{equation}\label{Eq:IRSbasic}
	\by_k = \bH_k\bx +\bn_k,
\end{equation}
where $\bx\in\Cset^{N_t}$ represents the transmit signal vector, subject to transmit power constraint $\Ex\{\|\bx\|^2\}\leq P_t$, with $P_t$ being the maximum transmit power. Assuming linear beamforming, the transmit vector $\bx\in\Cset^{N_t}$ can be written as $\bx=\sum_{k=1}^K\sqrt{P_k}\bq_ks_k$, where $\bq_k\in\Cset^{N_t}$, $s_k\in\Cset$, and $P_k\in\Rset^+$ are the normalized beamforming vector, the data symbol, and the transmit power allocated for the $k$th \gls{MU}, respectively. Here, $\Ex\{|s_k|^2\}=1$ and $\Ex\{s_ks_{k'}^*\}=0,\,\forall k'\neq k$ hold. The \gls{AWGN} at the $k$th \gls{MU} is denoted by $\bn_k\in\Cset^{N_r}$, where $\bn_k\sim\sCN(\bzero_{N_r},\sigma_n^2\bI_{N_r}),\,\forall k$, with $\sigma_n^2$ representing the noise power. Moreover, the channel matrix between the \gls{BS} and the $k$th \gls{MU} is denoted by $\bH_k\in\Cset^{N_r\times N_t}$. For notational simplicity, we omit the subscript $k$ in the subsequent analysis and define the channel model using a general matrix $\bH_{\NF}\in\Cset^{N_\rx\times N_\tx}$, where $N_\tx$ and $N_\rx$ represent the number of \gls{Tx} and \gls{Rx} antenna elements, respectively.

\subsection{Existing MIMO NF Channel Model}
\label{sec: Existing MIMO NF Channel Model}
In the \gls{NF} regime, the wavefront curvature across the \gls{Rx} plane becomes significant and cannot be ignored. Additionally, environmental scatterers cause multipath propagation, allowing the \gls{Rx} to receive signals reflected from scatterers through \gls{nLOS} paths. The following channel model has been widely adopted in the literature for \gls{MIMO} multipath channels in the \gls{NF} regime (see, for example, \cite[Eq. (25)]{liu2023near}, and \cite[Eq. (8)]{lu2023near}:
\begin{IEEEeqnarray}{ll}\label{Eq:existingMIMO}
	\bH_{\NF} = c_0\bH_{\NF}^\LOS +
 \sum_{s=1}^S \hat{c}_s\bH^\scat_s,\quad
\end{IEEEeqnarray}
where the channel matrices on the right-hand side are defined as follows ($m=1,\cdots,N_\rx$ and $n=1,\cdots,N_\tx$):
\begin{IEEEeqnarray}{cc} 
	[\bH_{\NF}^\LOS]_{m,n} = \, \e^{\jj\kk\|\bu_{\rx,m}-\bu_{\tx,n}\|} \label{Eq:LoSnear},\\
    \bH^\scat_s=\ba_{\rx}(\bu_{s})\ba_{\tx}^\Trans(\bu_{s}),\\
 {[\ba_{\tx}(\bu_{s})]_n}\! = \!\e^{\jj\kk\|\bu_{\tx,n}-\bu_{s}\|},
 \,\,\, 
 {[\ba_{\rx}(\bu_{s})]_m}   \!=\! \e^{\jj\kk\|\bu_{\rx,m}-\bu_{s}\|}\!\!.\quad\label{Eq:nLoSnearPoint}
\end{IEEEeqnarray}
In \eqref{Eq:existingMIMO}, we decompose $\bH_\NF$ into its \gls{LOS} and \gls{nLOS} components without loss of generality. Each component is further separated into a scalar term that represents the channel power gain and a normalized matrix that defines the channel structure. For instance, $\bH_{\NF}^\LOS$ represents the normalized LOS NF channel matrix, while $c_0$ is the corresponding channel power gain. Here, $\bu_{\tx,n}=(x_{\tx,n},y_{\tx,n},z_{\tx,n})$ and $\bu_{\rx,m}=(x_{\rx,m},y_{\rx,m},z_{\rx,m})$ denote the positions of the $n$th \gls{Tx} antenna and the $m$th \gls{Rx} antenna, respectively, while $\kk=2\pi/\lambda$ represents the wave number with $\lambda$ being the wavelength. Additionally, $\ba_{\tx}(\cdot)\in \Cset^{N_{\tx}}$ and $\ba_{\rx}(\cdot)\in \Cset^{N_{\rx}}$ are the \gls{Tx} and \gls{Rx} \gls{NF} array responses, respectively. The location of the $s$th scatterer is given by $\bu_s\in\Rset^3$, while $\hat{c}_s$ represents the channel power gain for the $s$th \gls{nLOS} path, with $S$ denoting the total number of scatterers. The normalized LOS channel matrix can be expressed in terms of the \gls{Tx} and \gls{Rx} array responses as follows:
\begin{IEEEeqnarray}{ll} 
	\bH_{\NF}^\LOS  &= [\ba_{\rx}(\bu_{\tx,1}),\dots,\ba_{\rx}(\bu_{\tx,N_{\tx}})]\IEEEyesnumber\IEEEyessubnumber\\
 &= [\ba_{\tx}(\bu_{\rx,1}),\dots,\ba_{\tx}(\bu_{\rx,N_{\rx}})]^\Trans. \IEEEyessubnumber
\end{IEEEeqnarray}
The \gls{nLOS} channel model in \eqref{Eq:existingMIMO} is based on the point scattering assumption, which models each scattering object as a secondary point source. Under this assumption, the channel amplitude  $\hat{c}_s$  is proportional to  $\frac{1}{\|\bu_{\rx}-\bu_{s}\|\|\bu_s-\bu_{\tx}\|}$, where  $\bu_{\tx}=(x_\tx, y_\tx, z_\tx)$ and $\bu_{\rx}=(x_\rx, y_\rx, z_\rx)$  represent the centers of the transmit and receive arrays, respectively. This implies that, unless the scatterer is located near to either the \gls{Tx} or the \gls{Rx}, the contribution of point-source scattering remains relatively weak compared to the LOS link, whose channel amplitude  $c_0$  is proportional to $\frac{1}{\|\bu_{\rx}-\bu_{\tx}\|}$  \cite{delbari2024far}.

\subsection{Proposed Mixed-Scattering NF NLOS Channel Model}
\begin{figure}[t]
    \centering
    \includegraphics[width=0.4\textwidth]{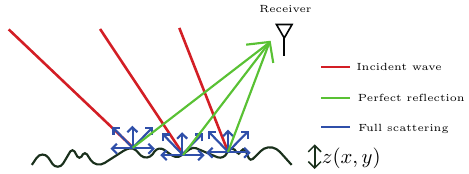}
    \caption{Wave reflected from a rough \gls{NS} comprising reflection along the specular direction as well as scattering.}
    \label{fig:mixed}
    \vspace{-0.5cm}
\end{figure}
In practice, the primary contributors to the \gls{nLOS} components are large reflective surfaces, for example, walls, ground, and ceiling \cite{grossman2017submillimeter}. However, the channel model in \eqref{Eq:existingMIMO} does not accurately capture the impact of such reflectors. Therefore, in this work, we propose a new statistical \gls{NF} \gls{MIMO} channel model that explicitly incorporates the impact of multiple \gls{NS} components. The proposed model is structured as follows:
\begin{IEEEeqnarray}{ll}
\label{Eq:MIMO_our_model}
	\bH_{\NF} = c_0\bH_{\NF}^\LOS + \sum_{s=1}^S \hat{c}_s\bH^\scat_s + \sum_{r=1}^R c_r\bH^\refl_r,\quad
\end{IEEEeqnarray}
where $\bH^\refl_r$ represents the normalized \gls{nLOS} channel matrix associated with the $r$th \gls{NS} in the environment, while $c_r$ is the corresponding channel power gain. The characteristics of the \gls{NS} depend on the operating frequency. When the \gls{NS} is sufficiently smooth, the interaction of an incoming wave with the \gls{NS} leads to ideal specular reflection, whereas a rough surface leads to non-ideal \gls{SS}. In the following section, we focus on modeling $c_r\bH^\refl_r$ in~\eqref{Eq:MIMO_our_model}.
\section{Channel Model for Non-ideal Surface Reflection}
\label{sec: Channel Model for Non-ideal Surface Reflection}
In this section, we characterize the statistical properties of $c_r\bH_r^\refl$. Firstly, we study the physics behind reflections from \glspl{NS}. Then, we decompose $c_r\bH_r^\refl$ into two main components, namely a deterministic and a stochastic component, and characterize each one separately. Finally, we summarize the proposed model and discuss its implications for \gls{NF} \gls{MIMO} system design, including beam training and channel estimation.
\subsection{Underlying Physics of Non-ideal Reflection}
Several factors influence the reflections from an \gls{NS}. Specifically, the surface material \cite{grossman2017submillimeter}, frequency, \gls{AoA} and \gls{AoD} \cite{jansen2011diffuse}, and surface roughness \cite{sheikh2021scattering} lead to scattering, diffraction, and absorption resulting in non-ideal reflection \cite{khamse2023scattering}. Among these, surface roughness relative to the wavelength plays a crucial role in determining the balance between reflection and scattering. Therefore, in this section, we first model the roughness of \glspl{NS} and then examine its impact on the \gls{nLOS} \gls{MIMO} channel matrix.
\subsubsection{Modeling of Roughness}
Consider an \gls{NS} lying in the $\x-\y$ plane, with its height fluctuating along the $\z$-axis, as illustrated in Fig.~\ref{fig:mixed}. The height at each point on the \gls{NS} can be described as a \gls{RV} following a specific statistical distribution \cite{alissa2019wave}. Similar to common modeling approaches in the literature \cite{jansen2011diffuse,khamse2023scattering,shi2017recovery}, we assume a Gaussian distribution for this \gls{RV}, denoted as $Z$, i.e., $Z\sim\Nset(0,\sigma_z^2)$, where $\sigma_z^2$ is the variance.
\subsubsection{Impact of Roughness on Reflection}
We adopt the \gls{HF} principle to assess how \gls{NS} roughness affects reflection. This principle states that every point along a propagating wavefront serves as a source of secondary spherical waves. Therefore, the new wavefront at any given location emerges from the superposition of these secondary waves \cite{goodman2005introduction,ajam2021channel}. Hence, each term \( c_r[\bH_r^\refl]_{m,n} \) in~\eqref{Eq:MIMO_our_model} can be expressed as follows:
\begin{align}
\label{eq: Huygen}
\frac{E(\bu_{\rx,m})}{E({\bu_{\tx,n}})}=\frac{\zeta}{\jj\lambda}\iint\limits_{\bu\in\Uset} &\underbrace{\frac{z_{\tx,n}}{\|\bu-\bu_{\tx,n}\|^2}\frac{z_{\rx,m}}{\|\bu_{\rx,m}-\bu\|^2}}_{\text{amplitude variations}}\nonumber\\
&\times\underbrace{\e^{\jj \kappa\|\bu-\bu_{\tx,n}\|}\e^{\jj \kappa\|\bu_{\rx,m}-\bu\|}}_{\text{phase variations}}{\dd}A.
\end{align}
Here, $\zeta$ is a constant that ensures surface passivity while also accounting for potential losses, $\bu = (x, y, z)$ represents an arbitrary point on the \gls{NS}, with the \gls{NS} centered at $\bu_c=(0,0,0)$. The term $\dd A$ denotes the area element, while \( E(\bu_{\rx,m}) \) and \( E(\bu_{\tx,n}) \) correspond to the electric fields at the $m$th \gls{Rx} and $n$th \gls{Tx} antennas, respectively. Additionally, the set $\Uset$ represents the total area of the \gls{NS}. Although \eqref{eq: Huygen} accurately captures the impact of reflection, solving it for a specific realization of $z(x,y)$\footnote{$z(x,y),\forall x,y$ is a sample value of the Gaussian random variable $Z$.} across the \gls{NS} does not yield a tractable model for system design and offers no analytical insights. We refer to \eqref{eq: Huygen} as the \gls{HF} integral, and numerically evaluate it under various scenarios to validate the accuracy of the proposed analytical model in Section~\ref{Performance Evaluation}.
\subsection{Proposed NF MIMO Channel Model}
\label{Proposed MIMO NF Channel Model}
We begin by decomposing channel matrix $c_r\bH^\refl_r$ into the following two components:
\begin{IEEEeqnarray}{ll}\label{Eq:newMIMO}
	c_r\bH^\refl_r = \bar{c}_{r}\bar{\bH}_{r}+\tilde{c}_{r}\tilde{\bH}_{r},
\end{IEEEeqnarray}
where $\bar{c}_{r}\bar{\bH}_{r}$ is deterministic and represents the channel mean
\begin{equation}
\label{eq: expectation_H}
    \bar{c}_{r}\bar{\bH}_{r}\triangleq\Ex\{c_r\bH^\refl_r\},
\end{equation}
while $\tilde{c}_{r}\tilde{\bH}_{r}$ accounts for the channel variations and is thus stochastic.
In the following, we characterize the deterministic and stochastic components of $c_r\bH^\refl_r$ separately.
\subsubsection{Deterministic Component}
To simplify the \gls{HF} integral and gain analytical insights into the deterministic component, we apply some reasonable approximations. Specifically, we approximate the area element, amplitude, and phase terms in \eqref{eq: Huygen}. These approximations, commonly used in the literature, enable tractable solutions of complicated \gls{HF} integral equations \cite{goodman2005introduction}.

\textbf{Area Element Approximation:} The area element ($\dd A$) in \eqref{eq: Huygen} is given by the magnitude of the cross product of the tangent vectors \gls{wrt} $\dd x\dd y$ as $|\frac{\partial z}{\partial x}\times\frac{\partial z}{\partial y}|$, where $z$ is a function of $x$ and $y$. This can be written as
\begin{equation}
    \dd A=\sqrt{1+\left(\frac{\partial z}{\partial x}\right)^2+\left(\frac{\partial z}{\partial y}\right)^2}\dd x\dd y.
\end{equation}
First, consider an ideally smooth surface, i.e., $\frac{\partial z}{\partial x}=\frac{\partial z}{\partial y}=0,\,\forall \bu\in\Uset$ (or equivalently $\sigma_z=0$), where the area element in \eqref{eq: Huygen} simplifies to $\dd A = \dd x \dd y$. As the \gls{NS} roughness $\sigma_z$ increases, each point $(x,y)$ on the \gls{NS} continues to act as a secondary source; however, the surface height variations $z(x,y)$ introduce an additional phase shift. In this scenario, we approximate the area element as \( \dd{A} \approx \dd{x} \dd{y} \) in \eqref{eq: Huygen} ($\frac{\partial z}{\partial x}=\frac{\partial z}{\partial y}\approx0$), while accounting for the effect of \gls{NS} roughness in the phase terms \( \e^{\jj \kappa\|\bu-\bu_{\tx,n}\|} \e^{\jj \kappa\|\bu_{\rx,m}-\bu\|} \), which depend on \( \bu \) or, equivalently, on  $z(x,y)$ \cite{shi2017recovery}. This approximation holds when $\sigma_z$ is on the order of the wavelength, which is the regime of interest investigated in this paper.\footnote{We note that this approximation is not valid for arbitrary surface geometries when $\frac{\partial z}{\partial x}\gg1$ and/or $\frac{\partial z}{\partial y}\gg1$.}

\textbf{Amplitude Approximation:} Using a Taylor series expansion, we approximate the inverse squared distance term as follows:
\begin{equation}
\frac{z_{\tx,n}}{\|\bu-\bu_{\tx,n}\|^2}=\frac{z_\tx}{u_\tx^2}+o\left(\frac{u_\tx^\max+u_\NS^\max}{u_\tx}\right),\,\forall \bu\in\Uset, \forall n,
\end{equation}
where $u_\tx = \|\bu_\tx\|$ represents the distance from the center of \gls{Tx} to the origin, and $u_\tx^\max$ and $u_\NS^\max$ denote the maximum dimensions of the \gls{Tx} and the \gls{NS}, respectively. Therefore, the approximation $\frac{z_{\tx,n}}{\|\bu-\bu_{\tx,n}\|^2}\approx\frac{z_\tx}{u_\tx^2}$ holds under the condition:
\begin{equation}
\label{eq: condition approximation}
\frac{u_\tx^\max+u_\NS^\max}{u_\tx} \ll 1,
\end{equation}
which ensures that the variations in $\bu$ over the \gls{NS} and of $\bu_{\tx,n}$ over \gls{Tx} are small relative to the distance of the \gls{Tx} center to the \gls{NS}. Similarly, applying the same approach at the \gls{Rx}, we obtain the approximation:
\begin{equation}
\frac{z_{\rx,m}}{|\bu-\bu_{\rx,m}|^2} \approx \frac{z_\rx}{u_\rx^2},\,\forall \bu\in\Uset, \forall m,
\end{equation}
where $u_\rx = \|\bu_\rx\|$ is the distance from the center of \gls{Rx} to the origin. Substituting these approximations into the \gls{HF} integral \eqref{eq: Huygen}, the channel coefficient for the reflected component is given by
\begin{equation}\label{eq: Huygen amplitude}
c_r[\bH_r^\refl]_{m,n}=c_I\iint\limits_{\bu\in\Uset} I(\bu,\bu_{\tx,n},\bu_{\rx,m}){\dd}x{\dd}y,
\end{equation}
 where $I(\bu,\bu_{\tx,n},\bu_{\rx,m})\triangleq\e^{\jj \kappa\|\bu-\bu_{\tx,n}\|}\e^{\jj \kappa\|\bu_{\rx,m}-\bu\|}$ and $c_I\triangleq\frac{\zeta z_\tx z_\rx}{\jj\lambda u_\tx^2 u_\rx^2}$.
 
 \textbf{Phase Approximation:} The roughness of the \gls{NS} affects the phase term, defined as \( \varphi(\bu) \triangleq \kappa(\| \bu_{\tx,n}-\bu\|+\| \bu_{\rx,m}-\bu\|) \), in (\ref{eq: Huygen amplitude}). By introducing $\bu_\rho\defeq(x,y,0)$, $\varphi(\bu)$ can be simplified as follows:
\begin{align}
    \varphi(\bu)=&\kappa \|\bu_{\tx,n}-\bu_\rho\|\sqrt{1-\frac{2z_{\tx,n} z+z^2}{\|\bu_{\tx,n}-\bu_\rho\|^2}}\nonumber\\
    &+\kappa\|\bu_{\rx,m}-\bu_\rho\|\sqrt{1-\frac{2z_{\rx,m} z+z^2}{\|\bu_{\rx,m}-\bu_\rho\|^2}}\nonumber
\end{align}
\begin{equation}
\label{eq: orders}
\overset{(a)}{=}\!\! \kappa\big(\| \bu_{\tx,n}-\bu_\rho\|+\| \bu_{\rx,m}-\bu_\rho\|\big)- \kappa_zz+\bigo(z^2),
\end{equation}
where we used the Taylor approximation $\sqrt{1-2w}\approx1-w$ for small $w$, and $\kappa_z=\kappa\left(\frac{z_{\tx,n}}{\|\bu_{\tx,n}-\bu\|}+\frac{z_{\rx,m}}{\|\bu_{\rx,m}-\bu\|}\right)$. We can approximate $\kappa z\frac{z_{\tx,n}}{\|\bu_{\tx,n}-\bu\|}\approx\kappa z\frac{z_\tx}{u_\tx}$ and $\kappa z\frac{z_{\rx,m}}{\|\bu_{\rx,m}-\bu\|}\approx\kappa z\frac{z_\rx}{u_\rx}$ under condition \eqref{eq: condition approximation} for both \gls{Tx} and \gls{Rx}. With these approximations, we can rewrite $\kappa_z=\kappa(\cos(\theta_{\tx})+\cos(\theta_{\rx}))$ where $\cos(\theta_{\tx})=\frac{z_\tx}{u_\tx}$, $\cos(\theta_{\rx})=\frac{z_\rx}{u_\rx}$. It is finally noted that equality $(a)$ follows from the fact that $\frac{z}{\|\bu_{\tx,n}-\bu_\rho\|}$ and $\frac{z}{\|\bu_{\rx,m}-\bu_\rho\|}$ are typically small and can be neglected.

With the approximations of the area element, amplitude, and phase in hand, we can now compute the expected value of $c_r[\bH_r^\refl]_{m,n}$ in (\ref{eq: Huygen amplitude}) \gls{wrt} the \gls{RV} $Z$ \cite{shi2017recovery} by omitting $\bigo(z^2)$, 
and obtain the following expression using (\ref{eq: expectation_H}):
\begin{align}
&\bar{c}_{r}[\bar{\bH}_{r}]_{m,n}=\Ex\{c_r[\bH_r^\refl]_{m,n}\}\nonumber\approx c_I\\
\label{eq: field expected}
     &\times\kern-0.5em\iint\limits_{\bu\in\Uset}\Ex\{\e^{-\jj \kappa_z z}\}\underbrace{\e^{\jj \kappa(\| \bu_{\tx,n}-\bu_\rho\|+\| \bu_{\rx,m}-\bu_\rho\|)}}_{\kern-0em \text{Deterministic and unaffected by $z$}}\dd x\dd y.
\end{align}
\begin{lem}
\label{lem: Gaussian distribution}
    If \gls{RV} $Z$ follows a zero-mean Gaussian distribution with variance $\sigma_z^2$, i.e., $Z\sim\Nset(0,\sigma_z^2)$, then the expected value of $\e^{-\jj \kappa_z z}$, where $\kappa_z$ is a constant, is given as follows:
    \begin{equation}
        \Ex\{\e^{-\jj \kappa_z z}\}=\e^{-\frac{\sigma_z^2\kappa_z^2}{2}}.
    \end{equation}
\end{lem}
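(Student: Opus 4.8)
The plan is to recognize $\Ex\{\e^{-\jj\kappa_z z}\}$ as the characteristic function of the Gaussian random variable $Z\sim\Nset(0,\sigma_z^2)$ evaluated at $-\kappa_z$, and to evaluate it directly by completing the square inside the Gaussian integral. Writing out the expectation against the density of $Z$,
\begin{equation}
\Ex\{\e^{-\jj\kappa_z z}\} = \frac{1}{\sqrt{2\pi}\,\sigma_z}\int_{-\infty}^{\infty} \e^{-\jj\kappa_z z}\,\e^{-z^2/(2\sigma_z^2)}\,\dd z,
\end{equation}
one combines the two exponentials into a single quadratic in $z$.

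Next I would complete the square in the exponent: a short computation gives
\begin{equation}
-\frac{z^2}{2\sigma_z^2}-\jj\kappa_z z = -\frac{1}{2\sigma_z^2}\left(z+\jj\kappa_z\sigma_z^2\right)^2 - \frac{\sigma_z^2\kappa_z^2}{2}.
\end{equation}
Pulling out the $z$-independent factor $\e^{-\sigma_z^2\kappa_z^2/2}$ leaves the shifted Gaussian integral $\frac{1}{\sqrt{2\pi}\,\sigma_z}\int_{-\infty}^{\infty}\e^{-(z+\jj\kappa_z\sigma_z^2)^2/(2\sigma_z^2)}\,\dd z$, which equals $1$, yielding $\Ex\{\e^{-\jj\kappa_z z}\}=\e^{-\sigma_z^2\kappa_z^2/2}$ as claimed. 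Since $\kappa_z$ is real, the result is real and positive, consistent with the symmetry of $Z$.

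The only point requiring a little care — and the one I would flag as the main (modest) obstacle — is justifying that the Gaussian integral with the \emph{complex} shift $\jj\kappa_z\sigma_z^2$ still evaluates to $1$. This can be done either by appealing to Cauchy's integral theorem for the entire function $w\mapsto \e^{-w^2/(2\sigma_z^2)}$ on a rectangular contour (the contributions of the two vertical sides vanish as the real part tends to $\pm\infty$, so the integral along the shifted horizontal line equals the one along the real axis), or, more economically, by simply invoking the standard closed form of the characteristic function of a zero-mean Gaussian, $\Ex\{\e^{\jj t Z}\}=\e^{-\sigma_z^2 t^2/2}$, and substituting $t=-\kappa_z$. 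Either route closes the argument.
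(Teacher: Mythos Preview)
Your proposal is correct and is essentially identical to the paper's own proof: both write the expectation as the Gaussian integral, complete the square to extract $\e^{-\sigma_z^2\kappa_z^2/2}$, and declare the remaining complex-shifted Gaussian integral equal to $1$. If anything, you are slightly more careful than the paper, which simply asserts the shifted integral equals $1$ without the contour/characteristic-function justification you mention.
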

\begin{proof}
We have
    \begin{align}
        &\Ex\{\e^{-\jj \kappa_z z}\}\!=\!\!\int_{-\infty}^\infty \!\e^{-\jj \kappa_z z} \frac{1}{\sqrt{2\pi\sigma_z^2}}\e^{-(\frac{z^2}{2\sigma_z^2})} \dd z=\e^{-\frac{\sigma_z^2\kappa_z^2}{2}}\nonumber\\
        &\times\!\!\underbrace{\int_{-\infty}^\infty \frac{1}{\sqrt{2\pi\sigma_z^2}}\e^{-(\frac{z}{\sqrt{2}\sigma_z}+\jj\frac{\sqrt{2}\sigma_z\kappa_z}{2})^2} \!\dd z}_{=1}\!=\!\e^{-\frac{\sigma_z^2\kappa_z^2}{2}}\!\!.
    \end{align}
    This completes the proof.
\end{proof}
\begin{theorem}
\label{Theorem Gaussian distribution}
    Consider an \gls{NS} that is extremely large compared to the \gls{Tx}/\gls{Rx} apertures, and assume that the height variations of the \gls{NS} follow the conditions in Lemma~\ref{lem: Gaussian distribution}. Then, we have the following expressions:
        \begin{IEEEeqnarray}{llll}
        \label{eq:coherent component}
&\bar{c}_{r}(g)&=\frac{\zeta_r}{\jj\lambda}\frac{\e^{-\frac{g}{2}}}{\|\bu^r_{\vrx}-\bu_{\tx}\|}&=\frac{\zeta_r}{\jj\lambda}\frac{\e^{-\frac{g}{2}}}{\|\bu_{\rx}-\bu^r_{\vtx}\|},\IEEEyesnumber\IEEEyessubnumber\label{eq:coherent component a}\\
\big[&\bar{\bH}_{r}\big]_{m,n} &= \e^{\jj\kk\|\bu^r_{\vrx,m}-\bu_{\tx,n}\|} &= \e^{\jj\kk\|\bu_{\rx,m}-\bu^r_{\vtx,n}\|},\IEEEyessubnumber\label{eq:coherent component b}
\end{IEEEeqnarray}
where $\bu^r_{\vtx,n}$ and $\bu^r_{\vrx,m}$ denote the virtual images of the $n$th \gls{Tx} antenna and the $m$th \gls{Rx} antenna mirrored on the $r$th \gls{NS}, $\bu^r_{\vtx}$ and $\bu^r_{\vrx}$ are the centers of the mirror images of the Tx and Rx arrays, respectively, 
and
\begin{equation}
g=(\kk_z\sigma_z)^2.
\end{equation}
\end{theorem}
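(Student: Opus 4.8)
The plan is to start from the expression for the deterministic component in \eqref{eq: field expected}, substitute the result of Lemma~\ref{lem: Gaussian distribution} to replace $\Ex\{\e^{-\jj\kappa_z z}\}$ by $\e^{-\sigma_z^2\kappa_z^2/2}$, and then recognize that, under the ``extremely large \gls{NS}'' assumption, $\kappa_z$ is approximately constant over the surface (it was already reduced to $\kappa_z=\kappa(\cos\theta_\tx+\cos\theta_\rx)$ in the phase-approximation step, depending only on the array centers). Therefore the factor $\e^{-\sigma_z^2\kappa_z^2/2}=\e^{-g/2}$ pulls out of the surface integral, leaving
\begin{equation*}
\bar{c}_r[\bar{\bH}_r]_{m,n}\approx c_I\,\e^{-g/2}\iint_{\bu\in\Uset}\e^{\jj\kappa(\|\bu_{\tx,n}-\bu_\rho\|+\|\bu_{\rx,m}-\bu_\rho\|)}\dd x\,\dd y.
\end{equation*}
The crux is then to evaluate this remaining integral over the (effectively infinite) flat plane $\Uset$ using the stationary-phase / method-of-images argument from geometric optics.

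First I would identify the stationary point of the phase $\psi(\bu_\rho)=\kappa(\|\bu_{\tx,n}-\bu_\rho\|+\|\bu_{\rx,m}-\bu_\rho\|)$ over the plane $z=0$: by Fermat's principle this is exactly the specular reflection point, i.e.\ the point where the path from $\bu_{\tx,n}$ to $\bu_{\rx,m}$ via the plane has minimal length, and that minimal length equals $\|\bu^r_{\vrx,m}-\bu_{\tx,n}\|=\|\bu_{\rx,m}-\bu^r_{\vtx,n}\|$, the distance to the mirror image. Next I would apply the two-dimensional stationary-phase approximation: for an extremely large surface the integral is dominated by a neighborhood of the specular point, and its value is (stationary-phase prefactor)$\times\e^{\jj\kappa\|\bu^r_{\vrx,m}-\bu_{\tx,n}\|}$. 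The Hessian of $\psi$ at the stationary point produces, together with the $c_I=\zeta z_\tx z_\rx/(\jj\lambda u_\tx^2 u_\rx^2)$ prefactor and the $\jj\lambda$ in the Huygens--Fresnel kernel, exactly the spherical-spreading factor $1/\|\bu^r_{\vrx,m}-\bu_{\tx,n}\|$ after the $z_\tx z_\rx/(u_\tx^2u_\rx^2)$ terms are reinterpreted through $\cos\theta_\tx,\cos\theta_\rx$ and the image geometry; I would then absorb all the surviving constants and loss factors into a single renormalized constant $\zeta_r$. This yields \eqref{eq:coherent component a} for $\bar c_r(g)$ and \eqref{eq:coherent component b} for $[\bar{\bH}_r]_{m,n}$, with $g=(\kappa_z\sigma_z)^2$. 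Finally, the two equivalent forms in each line of \eqref{eq:coherent component} follow from the elementary geometric fact that mirroring the \gls{Tx} across the \gls{NS} or mirroring the \gls{Rx} across it gives the same path length, i.e.\ $\|\bu^r_{\vrx,m}-\bu_{\tx,n}\|=\|\bu_{\rx,m}-\bu^r_{\vtx,n}\|$ and likewise for the centers.

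The main obstacle I expect is making the stationary-phase step rigorous — or at least clean — rather than hand-wavy: one must argue that the specular point lies in the interior of $\Uset$ (guaranteed by the ``extremely large'' hypothesis), that the phase has a nondegenerate Hessian there (true because $\bu_{\tx,n}$ and $\bu_{\rx,m}$ are strictly off the plane, so $z_\tx,z_\rx>0$), and that the amplitude approximations already made are consistent with keeping only the leading stationary-phase term. I would likely present this as a geometric-optics limit: invoke that in the high-frequency regime the Huygens--Fresnel integral over a large planar aperture reduces to the image-source field, cite the standard references already in the paper (e.g.\ \cite{goodman2005introduction}), and carry the bookkeeping of constants into $\zeta_r$ so that the statement of the theorem is matched without claiming more precision than the approximations support. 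The equivalence of the ``virtual \gls{Rx}'' and ``virtual \gls{Tx}'' formulations is then a one-line remark about reflection symmetry.
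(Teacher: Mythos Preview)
Your proposal is correct and follows essentially the same approach as the paper: substitute Lemma~\ref{lem: Gaussian distribution} to pull out the constant factor $\e^{-g/2}$, then identify the remaining surface integral as that of a perfectly smooth reflector and evaluate it via image theory/geometric optics. The paper's proof is much terser than yours---it simply labels the residual integral ``Perfect reflection'' and invokes image theory and geometric optics by citation (\cite{a2005antenna}) without writing out the stationary-phase computation or the Hessian bookkeeping you describe; your plan is a more explicit unpacking of exactly that step.
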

\begin{proof}
    \label{app: Gaussian distribution}
By substituting $(\kappa_z\sigma_z)^2=g$ into Lemma~\ref{lem: Gaussian distribution}, we can rewrite \eqref{eq: field expected} as follows:
    \begin{equation}
    \label{eq: integral perfect reflection}
        \bar{c}_{r}[\bar{\bH}_{r}]_{m,n}=\e^{-\frac{g}{2}}\underbrace{c_I\iint\limits_{\bu\in\Uset}\e^{\jj \kappa(\| \bu_{\tx,n}-\bu_\rho\|+\| \bu_{\rx,m}-\bu_\rho\|)}{\dd}x{\dd}y}_{\text{Perfect reflection}}.
    \end{equation}
Since the integral term in \eqref{eq: integral perfect reflection} is independent of $z$, it has the same value as in the case of perfect reflection, i.e., for $z(x,y)=0,\,\,\forall x,y\in\Uset$. Consequently, the deterministic component of the channel, $[\bar{\bH}_{r}]_{m,n}$, can be expressed as in \eqref{eq:coherent component} derived using image theory and geometric optics \cite{a2005antenna}. This concludes the proof.
\end{proof}
Theorem~\ref{Theorem Gaussian distribution} establishes that the mean of $c_r\bH_r^\refl$ retains the structure of ideal specular reflection, while its amplitude, $\bar{c}_{r}$, decreases as the \gls{NS} roughness increases, i.e., as $\sigma_z$ grows. This result will be validated in Section~\ref{Simulation Result} by comparing it with the \gls{HF} integral in \eqref{eq: Huygen} for practically large \glspl{NS}.
\subsubsection{Stochastic Component}\label{Stochastic Component}
Unlike the point scattering channel model in \eqref{Eq:existingMIMO}, the channel matrix for a rough \gls{NS} is not fully deterministic. Instead, it includes statistical variations due to the surface roughness. In the following, we derive a statistical model for the \gls{NF} channel to capture these effects by considering two general scenarios: One without surface correlation, which offers analytical simplicity and valuable insights, and one with surface correlation, which is more representative of practical environments. For each case, we first derive the joint \gls{PDF} of the elements of $\tilde{\bH}_{r}$, followed by an analysis of the covariance matrix of the channel elements and the resulting channel power gain.

\textit{2A) Without surface correlation:} In this case, we assume that the points on the surface of the \gls{NS} are statistically independent, i.e., no surface correlation exists between different points on the surface. Although this assumption is not strictly valid, it constitutes a good approximation for very rough surfaces, and it provides insights with tractable analysis\footnote{We note that assuming zero-length correlation implies that values at infinitesimally close points are uncorrelated. As a result, the quantity over an infinitesimal neighborhood behaves like an independent variable, potentially leading to large variations in slope. This contradicts the area element approximation $\dd A = \dd x\dd y$, which relies on smoothness over small regions. Nonetheless, we address and resolve this issue in Section~\ref{Stochastic Component}\textcolor{blue}{B} by considering non-zero surface correlation.}.

\textbf{Joint PDF of the Elements of $\tilde{\bH}_{r}$ in \eqref{Eq:newMIMO}:}
 $\varphi(\bu)$ in (\ref{eq: orders}) follows a certain distribution that is identical for all $\bu \in \Uset$. Consequently, the channel response in (\ref{eq: Huygen}) can be viewed as a sum of a large number of \gls{iid} random variables. According to the multidimensional \gls{CLT} \cite{papoulis2002probability}, this implies that the elements of $\tilde{\bH}_{r}$ follow a joint Gaussian distribution. This observation will be validated through simulations in Section~\ref{Simulation Result}. For a joint Gaussian distribution, the first- and second-order moments fully characterize the statistics. Since $\tilde{\bH}_{r}$ has zero mean by definition, we next focus on analyzing its covariance matrix.

\textbf{Covariance matrix:} The covariance of channel coefficients $c_r[\bH_r^\refl]_{n,m}$ and $c_r[\bH_r^\refl]_{n',m'}$ is denoted as $\Cov\{c_r[\bH_r^\refl]_{n,m},c_r^*[\bH_r^\refl]_{n',m'}^*\}$. Using Theorem~\ref{Theorem Gaussian distribution} and the notation $\alpha_{n,m}(g)\triangleq \bar{c}_{r}(g)[\bar{\bH}_{r}]_{n,m}$ for brevity, yields
\begin{align}
\label{eq:correlation1}
    &\Cov\{c_r[\bH_r^\refl]_{n,m},c_r^*[\bH_r^\refl]_{n',m'}^*\}\nonumber\\&=\Ex\left\{(c_r[\bH_r^\refl]_{n,m}-\alpha_{n,m}(g))(c_r[\bH_r^\refl]_{n',m'}-\alpha_{n',m'}(g))^*\right\}\nonumber\\
    &\overset{(a)}{=}\!\Ex\Big\{|c_I|^2\!\!\iint\limits_{\bu\in\Uset}\! I(\bu,\bu_{\tx,n},\bu_{\rx,m})\dd x\dd y\nonumber\\
    &\quad\quad\times\iint\limits_{\bu'\in\Uset}\! I^*(\bu'\!,\bu_{\tx,n'}\!,\bu_{\rx,m'}\!)\dd x'\dd y'\Big\}-|\bar{c}_{r}(g)|^2,
\end{align}
where $(a)$ follows from \eqref{eq: expectation_H} and \eqref{eq: Huygen amplitude}. In general, further simplification of \eqref{eq:correlation1} requires additional assumptions. To facilitate analysis, we categorize the surface roughness $\sigma_z$ into three regimes:\\
\textbf{Regime 1: $\kappa\sigma_z\ll1$}: In this case, the \gls{NS} is sufficiently smooth leading to perfect \gls{SR}.\\
\textbf{Regime 2:} This is a transient regime between Regimes 1 and 3, where both surface reflection and scattering are present.\\
\textbf{Regime 3: $\kappa\sigma_z\gg1$}: In this case, the \gls{NS} scatters the wave in all directions leading to full \gls{SS}.\\

In Regime 1, the channel is completely deterministic and can be directly obtained from Theorem~\ref{Theorem Gaussian distribution}. In contrast, for Regime 2, the \gls{HF} integral in \eqref{eq: Huygen} does not lead to a closed-form solution and must be evaluated numerically. Meanwhile, in Regime 3, the deterministic channel coefficient approaches zero as $|\bar{c}_{r}(g\to+\infty)|\to0$. To characterize the spatial correlation, we define the normalized covariance as
\begin{equation}
\label{eq: spatial correlation}
    [\bR]_{(n,m,n',m')}\triangleq\frac{1}{\Ex\{|c_r|^2\}}\Cov\{c_r[\bH_r^\refl]_{n,m},c_r^*[\bH_r^\refl]_{n',m'}^*\},
\end{equation}
and we refer to this quantity as the spatial correlation \cite{Arbitrary_corr,Matthaiou2012}. We further assume that:

Assumption 1: $\frac{2\|\bu_{\tx,n}-\bu_{\tx,n'}\|^2}{\lambda}<u_\tx$ and $\frac{2\|\bu_{\rx,m}-\bu_{\rx,m'}\|^2}{\lambda}<u_\rx$ hold.\\
Note that Assumption 1 holds in practice, as the distance between closely spaced antennas with significant correlation is typically much smaller than the \gls{Tx}-\gls{Rx} distance. For ease of analysis, we define a local spherical coordinate system at the \gls{Rx}, where the $\z$-axis aligns with the \gls{Rx} antenna positions $\bu_{\rx,m}$ and $\bu_{\rx,m'}$, and the origin is placed at their midpoint, $\frac{\bu_{\rx,m} + \bu_{\rx,m'}}{2}$. Similarly, a corresponding local coordinate system is considered at the \gls{Tx} \gls{wrt} \gls{Tx} antenna locations $\bu_{\tx,n}$ and $\bu_{\tx,n'}$. Using these coordinate definitions and Assumption 1, we obtain the following lemma.
\begin{lem}
    \label{lem: tworxtx}
     For Regime 3 and under Assumption 1, (\ref{eq:correlation1}) can be simplified as follows:
    \begin{align}
        \label{eq: lemma 1}
        [\bR]_{(n,m,n',m')}
        =\frac{1}{|\Uset|}\iint\limits_{\bu\in\Uset} &\e^{\jj \kappa(\|\bu_{\rx,m}-\bu_{\rx,m'}\|\sin(\theta_\rx^l(\bu)))}\\
        \times&\e^{\jj \kappa(\|\bu_{\tx,n}-\bu_{\tx,n'}\|\sin(\theta_\tx^l(\bu)))}\dd x\dd y,\nonumber
    \end{align}
    where $\theta_\rx^l(\bu)$ and $\theta_\tx^l(\bu)$ are the elevation angles 
    of point $\bu$ evaluated
    in the local \gls{Rx} and \gls{Tx} coordinate systems, respectively.
\end{lem}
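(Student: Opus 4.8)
The plan is to simplify the covariance \eqref{eq:correlation1} via the chain: (i) drop the subtracted mean term because we are in Regime~3; (ii) substitute the phase decomposition \eqref{eq: orders} into the product of the two surface integrals from \eqref{eq: Huygen amplitude}; (iii) perform the height average, which under the no-surface-correlation assumption in force here (Section~\ref{Stochastic Component}, part~A) collapses the double surface integral to a single one; (iv) evaluate the residual distance differences using the local coordinate systems and Assumption~1; and (v) absorb the leftover multiplicative constant into the normalization $\Ex\{|c_r|^2\}$. For (i)--(ii): in Regime~3, $\kappa\sigma_z\gg1$, so $g=(\kappa_z\sigma_z)^2\gg1$, and Theorem~\ref{Theorem Gaussian distribution} gives $|\bar{c}_r(g)|\propto\e^{-g/2}\to0$; hence $-|\bar{c}_r(g)|^2$ in \eqref{eq:correlation1} is negligible and the covariance becomes $|c_I|^2$ times the expectation of the product of the two integrals. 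Dropping the $\bigo(z^2)$ terms of \eqref{eq: orders}, $I(\bu,\bu_{\tx,n},\bu_{\rx,m})=\e^{\jj\kappa(\|\bu_{\tx,n}-\bu_\rho\|+\|\bu_{\rx,m}-\bu_\rho\|)}\,\e^{-\jj\kappa_z(\bu)z(\bu)}$, with the analogous conjugated expression for $I^*(\bu',\bu_{\tx,n'},\bu_{\rx,m'})$; the deterministic phase factors are non-random functions of the integration variables, so they come outside the expectation.

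What remains inside the double integral is the height average $\Ex\{\e^{-\jj\kappa_z(\bu)z(\bu)}\e^{\jj\kappa_z(\bu')z(\bu')}\}$, and I expect evaluating it to be the main obstacle. As $z(\bu)$ and $z(\bu')$ are independent for $\bu\neq\bu'$, this factorizes into $\e^{-\kappa_z(\bu)^2\sigma_z^2/2}\e^{-\kappa_z(\bu')^2\sigma_z^2/2}$, which vanishes in Regime~3; only the diagonal $\bu=\bu'$ survives, where the two exponentials cancel and the average equals $1$. Thus, up to a multiplicative constant tied to the (infinitesimal) correlation area and independent of the antenna indices, the kernel acts as a Dirac mass on $\{\bu=\bu'\}$, and the double integral collapses to
\[
\mathrm{const}\times\iint\limits_{\bu\in\Uset}\e^{\jj\kappa(\|\bu_{\tx,n}-\bu_\rho\|-\|\bu_{\tx,n'}-\bu_\rho\|)}\,\e^{\jj\kappa(\|\bu_{\rx,m}-\bu_\rho\|-\|\bu_{\rx,m'}-\bu_\rho\|)}\,\dd x\,\dd y.
\]
Making this ``$\bu=\bu'$ only'' step rigorous for a continuous, delta-correlated field is the delicate point --- exactly the issue flagged in the footnote of part~A --- and it is handled cleanly in part~B, where a genuine correlation function $\rho(\bu-\bu')$ is kept and the kernel $\e^{-g(1-\rho(\bu-\bu'))}$ is shown to concentrate sharply about the diagonal when $g\gg1$.

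Finally, I evaluate the two bracketed distance differences in the displayed integral. In the Tx-local spherical frame (origin at $\tfrac12(\bu_{\tx,n}+\bu_{\tx,n'})$, polar axis along the segment joining $\bu_{\tx,n}$ and $\bu_{\tx,n'}$, suitably oriented), a first-order Taylor expansion gives $\|\bu_{\tx,n}-\bu_\rho\|-\|\bu_{\tx,n'}-\bu_\rho\|\approx\|\bu_{\tx,n}-\bu_{\tx,n'}\|\sin\theta_\tx^l(\bu)$, where $\theta_\tx^l(\bu)$ is the elevation angle of $\bu$ in that frame; the neglected higher-order terms are of Fraunhofer type and are rendered negligible by Assumption~1. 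The same computation at the Rx gives $\|\bu_{\rx,m}-\bu_\rho\|-\|\bu_{\rx,m'}-\bu_\rho\|\approx\|\bu_{\rx,m}-\bu_{\rx,m'}\|\sin\theta_\rx^l(\bu)$. Substituting these back and dividing by $\Ex\{|c_r|^2\}$ --- which, by definition, equals the diagonal covariance obtained by setting $n'=n$ and $m'=m$ above, so that both exponentials reduce to $1$ and the single integral evaluates to $|\Uset|$ times the same constant --- cancels the constant and leaves the prefactor $1/|\Uset|$, which is precisely \eqref{eq: lemma 1}.
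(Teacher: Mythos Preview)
Your proposal is correct and follows essentially the same route as the paper: split the double surface integral into the off-diagonal part $\bu\neq\bu'$ (which vanishes in Regime~3 by independence and Lemma~\ref{lem: Gaussian distribution}) and the diagonal part $\bu=\bu'$, then Taylor-expand the distance differences in the local Tx/Rx frames and invoke Assumption~1 to drop the quadratic remainder. Your treatment is in fact slightly more explicit than the paper's about two points the paper glosses over---the origin of the $1/|\Uset|$ prefactor via the diagonal normalization $\Ex\{|c_r|^2\}$, and the heuristic nature of the ``$\bu=\bu'$ only'' collapse for a delta-correlated continuous field---so nothing is missing.
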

\begin{IEEEproof}
    \ifarxiv
    The proof is provided in Appendix~\ref{app: tworxtx}.
    \else
    The proof is provided in \cite[Lemma~1]{delbari2024nearfield}.
    \fi
\end{IEEEproof}
To gain insight from Lemma~\ref{lem: tworxtx}, we focus on a single \gls{Tx} antenna and calculate the spatial correlation between two \gls{Rx} antennas located at $\bu_{\rx,m}$ and $\bu_{\rx,m'}$, denoted by $[\bR]_{m,m'}$. Due to channel reciprocity, the correlation from one \gls{Rx} antenna to two \gls{Tx} antennas exhibits a similar behavior. We assume that the elevation angles $\theta(\bu)$ for all points $\bu \in \Uset$ on the \gls{NS} lie within the interval $[\theta_1, \theta_2]$, when expressed in the local spherical coordinate system at the \gls{Rx}.

\begin{figure}[t]
\centering
\begin{subfigure}{0.24\textwidth}
\includegraphics[width=\textwidth]{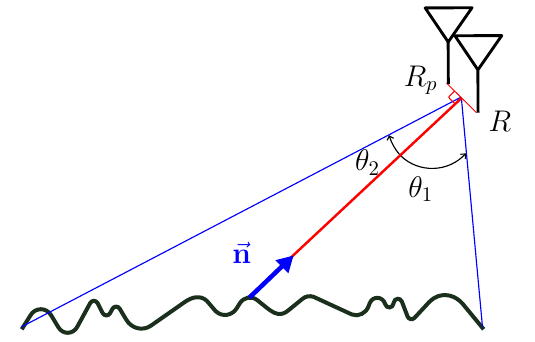}
    \vspace{-0.5cm}
\end{subfigure}
\begin{subfigure}{0.24\textwidth}
\includegraphics[width=\textwidth]{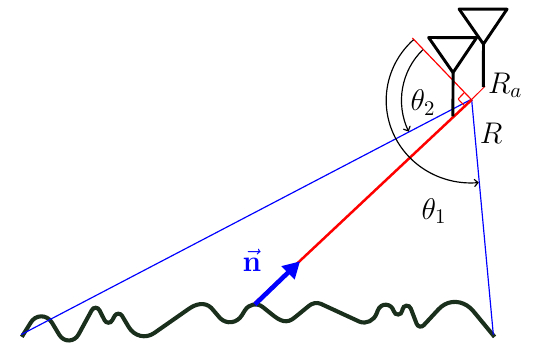}
    \vspace{-0.5cm}
\end{subfigure}
\caption{Vector $\vec \bn$ is aligned with $\bu_R-\bu_{R_a}$ and perpendicular to $\bu_R-\bu_{R_p}$.}
\vspace{-0.5cm}
\label{fig:mixed2}
\end{figure}

\begin{prop}
\label{prop: sinc}
    Assuming isotropic scattering within $\theta_1<\theta(\bu)<\theta_2$, the elements of the spatial correlation matrix $\bR$ are given by
    \begin{equation}
    \label{eq: sinc 1}
    |[\bR]_{m,m'}|=\sinc\Big(\frac{2d_{m,m'}}{\lambda}\cos\left(\frac{\theta_2+\theta_1}{2}\right)\sin\left(\frac{\theta_2-\theta_1}{2}\right)\Big),
\end{equation}
where $d_{m,m'}=\|\bu_{\rx,m}-\bu_{\rx,m'}\|$ and $\sinc(x)=\frac{\sin(\pi x)}{\pi x}$.
\end{prop}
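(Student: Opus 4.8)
The plan is to start from Lemma~\ref{lem: tworxtx}, specialized to a single Tx antenna (so the Tx exponential factor disappears), which gives
\begin{equation*}
[\bR]_{m,m'} = \frac{1}{|\Uset|}\iint\limits_{\bu\in\Uset} \e^{\jj\kappa\, d_{m,m'}\sin(\theta_\rx^l(\bu))}\,\dd x\,\dd y,
\end{equation*}
with $d_{m,m'}=\|\bu_{\rx,m}-\bu_{\rx,m'}\|$. The key modeling step is to convert this surface integral into an integral over the elevation angle $\theta = \theta_\rx^l(\bu)$ by invoking the isotropic-scattering assumption: under isotropy, the power (equivalently, the normalized measure) illuminating the surface is uniform in $\theta$ over $[\theta_1,\theta_2]$, so the surface integral reduces to $\frac{1}{\theta_2-\theta_1}\int_{\theta_1}^{\theta_2}\e^{\jj\kappa d_{m,m'}\sin\theta}\,\dd\theta$. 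I would state this reduction explicitly as the consequence of "isotropic scattering within $\theta_1<\theta(\bu)<\theta_2$" — it is the substantive hypothesis that makes the angular density flat.

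Next I would linearize $\sin\theta$ around the midpoint $\bar\theta \triangleq \frac{\theta_1+\theta_2}{2}$: since Assumption~1 forces $d_{m,m'}$ to be small relative to $u_\rx$, the relevant angular aperture $\theta_2-\theta_1$ is small, so $\sin\theta \approx \sin\bar\theta + \cos\bar\theta\,(\theta-\bar\theta)$ on $[\theta_1,\theta_2]$. Substituting and pulling out the constant phase $\e^{\jj\kappa d_{m,m'}\sin\bar\theta}$ (which has unit modulus and hence drops out when we take $|[\bR]_{m,m'}|$), the integral becomes
\begin{equation*}
\frac{1}{\theta_2-\theta_1}\int_{\theta_1}^{\theta_2}\e^{\jj\kappa d_{m,m'}\cos\bar\theta\,(\theta-\bar\theta)}\,\dd\theta
= \frac{1}{\Delta}\int_{-\Delta/2}^{\Delta/2}\e^{\jj\kappa d_{m,m'}\cos\bar\theta\, t}\,\dd t,
\end{equation*}
where $\Delta = \theta_2-\theta_1$. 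This is a textbook integral: $\frac{1}{\Delta}\int_{-\Delta/2}^{\Delta/2}\e^{\jj\beta t}\,\dd t = \frac{\sin(\beta\Delta/2)}{\beta\Delta/2} = \sinc(\beta\Delta/(2\pi))$ with $\beta = \kappa d_{m,m'}\cos\bar\theta$ and $\kappa = 2\pi/\lambda$. Plugging in and simplifying $\beta\Delta/(2\pi) = \frac{2\pi}{\lambda}d_{m,m'}\cos\bar\theta\cdot\frac{\Delta}{2}\cdot\frac{1}{2\pi} = \frac{d_{m,m'}}{\lambda}\cos\bar\theta\cdot\frac{\Delta}{2}$ gives exactly \eqref{eq: sinc 1} once one writes $\cos\bar\theta = \cos\!\big(\frac{\theta_2+\theta_1}{2}\big)$ and $\sin\!\big(\frac{\theta_2-\theta_1}{2}\big)\approx \frac{\Delta}{2}$; I would in fact keep the exact form $\sin\!\big(\frac{\theta_2-\theta_1}{2}\big)$ rather than $\frac{\Delta}{2}$, noting that to the order of the approximation already made they coincide, which is presumably why the statement is written with the sine.

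The main obstacle is the first step — rigorously justifying that "isotropic scattering" translates into a uniform density in $\theta$ for the reduced one-dimensional integral, including correctly handling the Jacobian of the change of variables from the planar surface coordinates $(x,y)$ to the local elevation angle and checking that the azimuthal dependence integrates out cleanly. A secondary, milder concern is bookkeeping the linearization error: one must confirm that the neglected quadratic term $\kappa d_{m,m'}(\theta-\bar\theta)^2$ is indeed $o(1)$ over the aperture, which follows from Assumption~1 but should be spelled out. Everything after the angular reduction is the routine evaluation of $\int \e^{\jj\beta t}\,\dd t$.
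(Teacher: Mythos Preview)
Your approach arrives at the right answer only approximately, and the very point you flag as the ``main obstacle'' is where it diverges from the paper. The paper does \emph{not} interpret isotropy as a uniform density in $\theta$; instead it performs the change of variables from $(x,y)$ on the planar \gls{NS} to local spherical angles, which brings in the Jacobian $\dd x\,\dd y = \cos\theta\,\dd\theta\,\dd\phi$ and the normalization $|\Uset| = (\phi_2-\phi_1)(\sin\theta_2-\sin\theta_1)$. With the $\cos\theta$ factor present, the substitution $u=\sin\theta$ makes the $\theta$-integral \emph{exact}:
\[
\int_{\theta_1}^{\theta_2}\e^{\jj\kappa d_{m,m'}\sin\theta}\cos\theta\,\dd\theta
= \int_{\sin\theta_1}^{\sin\theta_2}\e^{\jj\kappa d_{m,m'}\,u}\,\dd u,
\]
and the identity $\sin\theta_2-\sin\theta_1 = 2\cos\!\big(\tfrac{\theta_1+\theta_2}{2}\big)\sin\!\big(\tfrac{\theta_2-\theta_1}{2}\big)$ then delivers the stated sinc argument with no linearization whatsoever.

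By contrast, your uniform-in-$\theta$ reduction forces you to linearize $\sin\theta$ and then retrofit $\sin\!\big(\tfrac{\theta_2-\theta_1}{2}\big)$ in place of $\tfrac{\theta_2-\theta_1}{2}$ a posteriori. This is not wrong in the small-aperture regime implied by Assumption~1, but it obscures the fact that the formula in the proposition is exact under the paper's change of variables, and it leaves you with an unresolved justification step that the paper simply does not need. The fix is short: keep the Jacobian, and the two approximations you introduce both disappear.
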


\begin{IEEEproof}
\ifarxiv
    The proof is provided in Appendix~\ref{app: sinc}.
\else 
    The proof is provided in \cite[Proposition~2]{delbari2024nearfield}.
\fi
\end{IEEEproof}
Proposition~\ref{prop: sinc} shows that the spatial correlation is maximized (minimized) when vector $\bu_m - \bu_{m'}$ is parallel (perpendicular) to vector $\vec{\bn}$, which denotes the direction from the center of the \gls{NS} to the center of the \gls{Rx}. To visualize this, consider a reference antenna $R$ and two other antennas, $R_a$ (aligned) and $R_p$ (perpendicular), placed such that the vector from $R$ to $R_a$ is parallel to $\vec{\bn}$, and the vector from $R$ to $R_p$ is perpendicular to $\vec{\bn}$. These placements are illustrated in Fig.~\ref{fig:mixed2}. In the following corollary, we simplify \eqref{eq: sinc 1} for these two special cases.
\begin{corol}
Under the assumptions of Proposition~\ref{prop: sinc}, we have $|[\bR]_{m,m'}|$
\label{corol: sinc}
$$
=
\begin{cases}
\sinc\Big(\frac{2\|\bu_R-\bu_{R_a}\|}{\lambda}\sin^2\left(\frac{\theta_c}{2}\right)\Big), \theta_1=-\frac{\pi}{2},\, \theta_2=-\frac{\pi}{2}+\theta_c,\\
\sinc\Big(\frac{2\|\bu_R-\bu_{R_p}\|}{\lambda}\sin\left(\frac{\theta_c}{2}\right)\Big),\,\, \theta_1=-\frac{\theta_c}{2},\, \theta_2=\frac{\theta_c}{2},
\end{cases}
$$
where we have used the notation $\theta_c\triangleq\theta_2-\theta_1$.
\end{corol}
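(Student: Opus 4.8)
The plan is to obtain the corollary directly from the general expression \eqref{eq: sinc 1} of Proposition~\ref{prop: sinc} by substituting the two specific elevation supports $[\theta_1,\theta_2]$ induced by the antenna placements of Fig.~\ref{fig:mixed2}, and then collapsing the trigonometric factor. Recall that in Proposition~\ref{prop: sinc} the $\z$-axis of the local spherical frame is aligned with the segment joining the two \gls{Rx} antennas and the \gls{NS} is modeled as subtending the elevation interval $[\theta_1,\theta_2]$ in that frame, with $\theta_c\triangleq\theta_2-\theta_1$. First I would identify $[\theta_1,\theta_2]$ for each geometry. For the aligned pair $(R,R_a)$, the vector $\bu_R-\bu_{R_a}$ is parallel to $\vec{\bn}$, so the local $\z$-axis points along $\vec{\bn}$; consequently the \gls{NS} lies on the $\theta=-\tfrac{\pi}{2}$ side of the equatorial plane and spans $[\theta_1,\theta_2]=[-\tfrac{\pi}{2},\,-\tfrac{\pi}{2}+\theta_c]$. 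For the perpendicular pair $(R,R_p)$, the local $\z$-axis is orthogonal to $\vec{\bn}$, so the \gls{NS} is centered on the equatorial plane and spans the symmetric interval $[\theta_1,\theta_2]=[-\tfrac{\theta_c}{2},\,\tfrac{\theta_c}{2}]$. In either case the spacing $d_{m,m'}$ in \eqref{eq: sinc 1} equals $\|\bu_R-\bu_{R_a}\|$ or $\|\bu_R-\bu_{R_p}\|$, respectively.

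The second step is the trigonometric simplification of $\cos\big(\tfrac{\theta_2+\theta_1}{2}\big)\sin\big(\tfrac{\theta_2-\theta_1}{2}\big)$ in each case. For the aligned pair, $\tfrac{\theta_2+\theta_1}{2}=-\tfrac{\pi}{2}+\tfrac{\theta_c}{2}$ and $\tfrac{\theta_2-\theta_1}{2}=\tfrac{\theta_c}{2}$, hence $\cos\big(-\tfrac{\pi}{2}+\tfrac{\theta_c}{2}\big)=\sin\big(\tfrac{\theta_c}{2}\big)$ and the product equals $\sin^2\big(\tfrac{\theta_c}{2}\big)$. For the perpendicular pair, $\tfrac{\theta_2+\theta_1}{2}=0$ gives cosine factor $1$, while $\sin\big(\tfrac{\theta_2-\theta_1}{2}\big)=\sin\big(\tfrac{\theta_c}{2}\big)$, so the product equals $\sin\big(\tfrac{\theta_c}{2}\big)$. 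Plugging these into \eqref{eq: sinc 1} reproduces exactly the two branches of the claimed formula; since $\sinc(\cdot)$ is even, the sign of the argument is immaterial.

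The computation itself is elementary, so the step deserving the most care — and the one I would spell out in detail — is the geometric bookkeeping in the first paragraph: verifying that the ``aligned'' and ``perpendicular'' configurations of Fig.~\ref{fig:mixed2} indeed map to the elevation supports $[-\tfrac{\pi}{2},-\tfrac{\pi}{2}+\theta_c]$ and $[-\tfrac{\theta_c}{2},\tfrac{\theta_c}{2}]$ in the local frame attached to the antenna pair, and that the isotropic-scattering and Assumption~1 premises inherited from Proposition~\ref{prop: sinc} remain consistent with these two choices of frame. Once that identification is fixed, the corollary follows at once, and it also makes transparent the qualitative claim preceding it, since $\sin^2(\theta_c/2)\le\sin(\theta_c/2)$ shows the aligned placement decorrelates more slowly (higher correlation) than the perpendicular one.
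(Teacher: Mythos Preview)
Your proposal is correct and follows exactly the approach indicated by the paper, which states that the corollary is obtained ``by substituting the definitions of $\theta_1$ and $\theta_2$ in \eqref{eq: sinc 1} and using further trigonometric manipulations.'' Your extra care in justifying why the aligned and perpendicular configurations of Fig.~\ref{fig:mixed2} correspond to the elevation supports $[-\tfrac{\pi}{2},-\tfrac{\pi}{2}+\theta_c]$ and $[-\tfrac{\theta_c}{2},\tfrac{\theta_c}{2}]$ is a welcome addition that the paper itself omits.
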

\begin{IEEEproof}
    The proof is omitted here due to space limitations, but can be obtained by substituting the definitions of $\theta_1$ and $\theta_2$ in \eqref{eq: sinc 1} and using further trigonometric manipulations.
\end{IEEEproof}
Corollary~\ref{corol: sinc} yields the following insights. First, since $\sin^2(x) < \sin(x),\quad \forall x \in (0, \pi)$, the spatial correlation between antennas is higher when $\bu_m - \bu_{m'}$ is parallel to $\vec{\bn}$, compared to the perpendicular case, assuming equal distances to the reference points, i.e., $|\bu_R - \bu_{R_a}| = |\bu_R - \bu_{R_p}|$. Second, increasing the angular spread $\theta_c$ of the \gls{AoA} reduces the correlation. This occurs when the \gls{NS} is larger or the \gls{Rx} antennas are located closer to the \gls{NS}. The accuracy of the expressions in Corollary~\ref{corol: sinc} will be validated through simulations in Section~\ref{Simulation Result}.

\textbf{Channel power gain:} In this part, we focus on the situation with only one \gls{Tx} and one \gls{Rx}. Based on Theorem~\ref{Theorem Gaussian distribution}, the channel power gain in Regime 1 can be directly calculated, as the channel is deterministic. For Regime 3, the channel power gain can be derived analytically by exploiting the law of conservation of energy and the symmetry inherent to full scattering. In this regime, the deterministic component vanishes, i.e., $\e^{-\frac{g}{2}} \to 0$ in (\ref{eq:coherent component}), and the reflected channel power is argued to be uniformly distributed across all directions due to the homogeneous nature of the \gls{SS}. We define the power gain of the reflected channel from the $r$th \gls{NS} in this regime as $\tilde{c}_{r}(+\infty)$, which can be expressed as follows
\begin{equation}
\label{eq: channel power gain without surface correlation}
    |\tilde{c}_{r,+\infty}|^2\triangleq\frac{P_\rx}{P_\tx}=\frac{\zeta_r}{\lambda}\Big(\frac{A_\rx D_r}{4\pi u_\rx^2}\Big)\Big(\frac{A_r D_\tx}{4\pi u_\tx^2}\Big),
\end{equation}
where $P_\rx$ and $P_\tx$ denote the \gls{Rx} and \gls{Tx} power, respectively, $A_\rx$ and $A_r$ are the effective areas of the \gls{Rx} and the \gls{NS}, and $D_\tx$ and $D_r$ represent the directivities of the \gls{Tx} and the \gls{NS}, respectively. We assume $D_r = 2$ (i.e., 3~dB gain), as the \gls{NS} reflects energy uniformly over half of the space. For Regime 2, unfortunately, a closed-form solution does not exist, and (\ref{eq: Huygen}) must be solved directly. Since this is analytically intractable, we adopt the following model for the channel power gain of the stochastic component:
\begin{equation}
\label{eq: stochastic power}
    \Ex\{|\tilde{c}_{r}(g)|^2\}=(1-\e^{-\frac{g}{2}})^2|\tilde{c}_{r,+\infty}|^2,
\end{equation}
which aligns with the limiting cases for Regimes 1 ($g = 0$) and 3 ($g \to +\infty$). Based on \eqref{eq: stochastic power}, we also propose the following expression for the total channel power gain:
\begin{equation}
\label{eq: total power}
    \Ex\{|c_r|^2\}=\bar{c}_{r}(0)\e^{-g}+(1-\e^{-\frac{g}{2}})^2|\tilde{c}_{r,+\infty}|^2.
\end{equation}

In Section~\ref{Simulation Result}, we will verify via simulation (see Fig.~\ref{fig:3regimes}) that this heuristic model closely approximates the results obtained through numerical evaluation of the \gls{HF} integral in \eqref{eq: Huygen}.

\textit{2B) With surface correlation:} In this case, we assume a local dependence exists among different points on the \gls{NS}, introducing surface correlation. Accounting for this correlation enhances the accuracy of the channel model.

\textbf{Joint PDF of the elements of $\tilde{\bH}_{r}$ in \eqref{Eq:newMIMO}:} 
Due to surface correlation, the heights $z(x,y)$ are no longer \gls{iid}. However, if the aperture of the \gls{NS} is much larger than the correlation length, the surface can be approximated as comprising many locally correlated but mutually independent segments. The sum of these independent segments converges to a joint Gaussian random variable by the multidimensional \gls{CLT}, so the \gls{PDF} of the channel is still Gaussian.
Nevertheless, the surface correlation may impact the channel covariance matrix and the total power as discussed below.

\textbf{Covariance matrix:} We start from \eqref{eq:correlation1} to derive the covariance matrix. This time, we cannot decompose the integral into two parts for $\bu\neq\bu'$ and $\bu=\bu'$, as we did to prove Lemma~\ref{lem: tworxtx}, due to the surface correlation. To evaluate \eqref{eq:correlation1}, we begin by interchanging the expectation and the integrals. We then compute the expectation using the phase expression introduced in \eqref{eq: orders}, as follows:
\begin{align}
    &\Ex\Big\{\!I(\bu,\bu_{\tx,n},\bu_{\rx,m})I^*(\bu'\!,\bu_{\tx,n'}\!,\bu_{\rx,m'}\!)\Big\}
    \!\!\!\overset{(a)}{=}\!\!\Ex\Big\{\!\e^{\jj\kappa F+\jj\kappa_z(z-z')}\!\!\Big\},\label{eq: correlation with z and z'}
\end{align}
where $F\defeq\| \bu_{\tx,n}-\bu_\rho\|-\| \bu_{\tx,n'}-\bu'_\rho\|+\| \bu_{\rx,m}-\bu_\rho\|-\| \bu_{\rx,m'}-\bu'_\rho\|$ and $(a)$ follows from \eqref{eq: orders} by omitting $\bigo(z^2)$ and $\bigo(z'^2)$ terms. The only \glspl{RV} in \eqref{eq: correlation with z and z'} are $z$ and $z'$, which follow a joint Gaussian distribution with a general correlation coefficient $C(\rho)$, where $\rho=\|\bu'-\bu\|$ denotes radial distance, see Fig.~\ref{fig: correlated rough surface}. Therefore, the expectation operator only applies to the term $\e^{\jj\kappa_z(z-z')}$ resulting in 
\begin{align}
    \Ex\{\e^{\jj\kappa_z(z-z')}\}&=\!\!\int\limits_{\!\!\!\!\!\!\!\!-\infty}^\infty\int\limits_{\!\!\!\!\!-\infty}^\infty \! \frac{\e^{\jj\kappa_z(z-z')}}{2\pi\sigma_z^2\sqrt{1-C(\rho)^2}}\nonumber\\
    &\quad\quad\times\e^{-\frac{1}{2\sigma_z^2(1-C(\rho)^2)}(z^2-2C(\rho) zz'+z'^2)} \dd z\dd z'\nonumber\\
    &\overset{(b)}{=}\e^{-\sigma_z^2\kappa_z^2(1-C(\rho))},
    \label{eq: correlation joint 1}
\end{align}
where $(b)$ follows from the closed-form result given in \cite[Eq. 13]{shi2017recovery}. By substituting the result of \eqref{eq: correlation joint 1} into \eqref{eq: correlation with z and z'} and then into \eqref{eq:correlation1}, we obtain
\begin{align}
\label{eq:correlation1 joint}
    &\Cov\{c_r[\bH_r^\refl]_{n,m},c_r^*[\bH_r^\refl]_{n',m'}^*\}\nonumber\\
&=|c_I|^2\times\!\!\iint\limits_{\!\!\!\!\bu\in\Uset}\!\!\iint\limits_{\bu'\in\Uset}\! \e^{-\sigma_z^2\kappa_z^2(1-C(\rho))+\jj\kappa F}\dd x\dd y\dd x'\dd y'-|\bar{c}_{r}(g)|^2.
\end{align}
Now, we expand $\|\bu_{\tx,n'}-\bu'_\rho\|$ in the definition of $F$ \gls{wrt} $\|\bu_{\tx,n'}-\bu_\rho\|$ as follows: 
\begin{align}
    \label{eq: expand ut-u' in ut-u}
    \|\bu_{\tx,n'}-\bu'_\rho\|\overset{(a)}{=}&\|\bu_{\tx,n'}-\bu_\rho\|+\rho\cos(\Psi_\tx)\nonumber\\
    &+\frac{\rho^2\sin^2(\Psi_\tx)}{2\|\bu_{\tx,n'}-\bu_\rho\|}+\bigo(\rho^3),
\end{align}
where $\Psi_\tx$ is the angle between vectors $\bu_{\tx,n'}-\bu_\rho$ and $\bu'_\rho-\bu_\rho$ and $\rho=\|\bu'_\rho-\bu_\rho\|$ (see Fig.~\ref{fig: correlated rough surface}). Moreover, $(a)$ follows from \cite[Lemma~1]{delbari2024far}. A similar expansion can be obtained for the \gls{Rx} side by substituting indices $\tx$ and $n$ with $\rx$ and $m$, respectively. Substituting \eqref{eq: expand ut-u' in ut-u} into the definition of $F$ inside of \eqref{eq:correlation1 joint} and omitting $\bigo(\rho^3)$ yields:
\begin{align}
    \label{eq: F simplification}
    F&\!\!=\!\!\| \bu_{\tx,n}\!-\!\bu_\rho\|\!-\!\| \bu_{\tx,n'}\!-\!\bu_\rho\|\!-\!\rho\cos(\Psi_\tx)\!-\!\frac{\rho^2\sin^2(\Psi_\tx)}{2\|\bu_{\tx,n'}-\bu_\rho\|}\nonumber\\
    +&\| \bu_{\rx,m}\!\!-\!\bu_\rho\|\!-\!\| \bu_{\rx,m'}\!\!-\!\bu_\rho\|\!-\!\rho\cos(\Psi_\rx)\!-\!\frac{\rho^2\sin^2(\Psi_\rx)}{2\|\bu_{\rx,m'}-\bu_\rho\|}.
\end{align}
\begin{figure}
    \centering
    \includegraphics[width=0.3\textwidth]{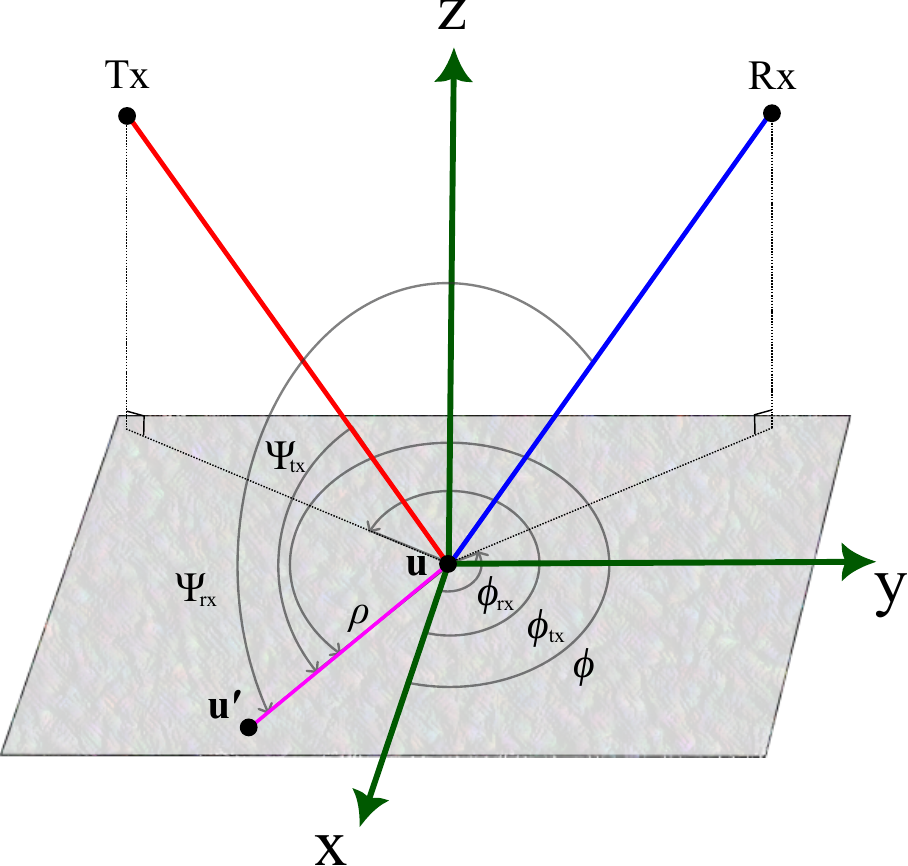}
    \caption{Reflection from a correlated rough surface.}
    \label{fig: correlated rough surface}
    \vspace{-0.5cm}
\end{figure}
To simplify \eqref{eq:correlation1 joint}, we need some assumptions on $C(\rho)$ as follows:

Assumption 2: We assume that when the term $\frac{\kappa\rho^2}{\|\bu_{q_1,q_2'}-\bu_\rho\|}$ is sufficiently large, e.g., larger than $\frac{\pi}{8}$, the correlation factor $C(\rho)$ becomes approximately zero, i.e., $C(\rho) \approx 0$. Here, $q_1=\{\tx,\rx\}$, $q_2=\{n,m\}$, and $\|\bu_{q_1,q_2'}-\bu_\rho\|$ denotes the distance of the \gls{Tx}/\gls{Rx} antenna from \gls{NS}. 

For typical indoor and outdoor scenarios, $\|\bu_{q_1,q_2'}-\bu_{\rho}\|$ is on the order of meters or tens of meters, respectively. This implies that $\rho$ must be larger than $\sqrt{\frac{10\lambda}{16}}$ (approximately 97~mm at $f=20$ GHz and 56~mm at $f=60$ GHz for centimeter and \gls{mmWave} bands, respectively) for $\frac{\kappa\rho^2}{\|\bu_{q_1,q_2'}-\bu_\rho\|}\geq\frac{\pi}{8}$ to hold. This assumption is practically justified, as empirical observations show that $C(\rho)$ becomes negligible when $\rho\geq20-30$~mm \cite[Table I]{ma2019terahertz} or even $\rho\geq1$~mm for the study in \cite{benstock2014influence}. Therefore, the term $\frac{\kappa\rho^2}{|\bu_{q_1,q_2'}-\bu_\rho|}$ has a negligible impact in the integral in \eqref{eq:correlation1 joint}.
In the following lemma, we prove that the spatial correlation, assuming non-zero surface correlation, is also given by \eqref{eq: lemma 1}.
\begin{lem}
    \label{lem: tworxtx correlation}
     By assuming Regime 3 under Assumptions 1 and 2, the spatial correlation (normalized form of \eqref{eq:correlation1 joint}) can be simplified to \eqref{eq: lemma 1}.
\end{lem}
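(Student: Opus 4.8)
The plan is to reduce the correlated case to the uncorrelated computation behind Lemma~\ref{lem: tworxtx}, by showing that in Regime~3 the surface-correlation kernel in \eqref{eq:correlation1 joint} confines the quadruple integral to an arbitrarily thin neighbourhood of the diagonal $\bu'=\bu$, where the integrand reduces to the one already analysed for zero surface correlation. First, since Regime~3 means $\kappa\sigma_z\gg1$ and hence $g=(\kappa_z\sigma_z)^2\gg1$ (note $\kappa_z=\kappa(\cos\theta_\tx+\cos\theta_\rx)$ is a fixed multiple of $\kappa$, bounded away from zero outside grazing incidence), the deterministic term obeys $|\bar{c}_{r}(g)|^2=|\bar{c}_{r}(0)|^2\e^{-g}\to0$ and is dropped. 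For the remaining quadruple integral, fix $\bu$ and integrate over $\bu'$: by Assumption~2 together with the empirically observed rapid decay of $C(\rho)$, for $\rho=\|\bu'_\rho-\bu_\rho\|$ beyond a small radius $\rho_0$ of order $\sqrt{\lambda\,\|\bu_{q_1,q_2'}-\bu_\rho\|}$ one has $C(\rho)\approx0$, so there the kernel $\e^{-\sigma_z^2\kappa_z^2(1-C(\rho))}$ is $\approx\e^{-g}\to0$; hence the $\bu'$-integral is effectively supported on the disc $\{\|\bu'_\rho-\bu_\rho\|\le\rho_0\}$. As the \gls{NS} is extremely large relative to $\rho_0$, this disc lies inside $\Uset$ up to a boundary layer whose area is a vanishing fraction of $|\Uset|$, which I neglect.

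Inside this disc I would invoke the expansion \eqref{eq: F simplification}. By Assumption~2 each of the quadratic terms $\frac{\rho^2\sin^2(\Psi_\tx)}{2\|\bu_{\tx,n'}-\bu_\rho\|}$ and $\frac{\rho^2\sin^2(\Psi_\rx)}{2\|\bu_{\rx,m'}-\bu_\rho\|}$ contributes a phase of magnitude at most $\pi/16$ for $\rho\le\rho_0$ and is dropped, leaving $F=F_0(\bu)-\rho(\cos\Psi_\tx+\cos\Psi_\rx)$, where $F_0(\bu)\defeq\|\bu_{\tx,n}-\bu_\rho\|-\|\bu_{\tx,n'}-\bu_\rho\|+\|\bu_{\rx,m}-\bu_\rho\|-\|\bu_{\rx,m'}-\bu_\rho\|$ is precisely the phase that arises on the diagonal $\bu'=\bu$. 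The residual first-order term equals $-\langle\bu'_\rho-\bu_\rho,\,\bd_\tx+\bd_\rx\rangle$, where $\bd_\tx,\bd_\rx$ are the unit vectors from $\bu_\rho$ toward $\bu_{\tx,n'}$ and $\bu_{\rx,m'}$; since the correlated antennas are closely spaced, to the order of the approximations already used these are just the directions from $\bu_\rho$ to the \gls{Tx} and \gls{Rx} array centres, hence independent of $n,n',m,m'$. Because the kernel depends on $\bu'$ only through $\rho$, the disc is symmetric about $\bu_\rho$, so pairing $\bu'_\rho-\bu_\rho$ with $\bu_\rho-\bu'_\rho$ cancels the odd part of $\e^{-\jj\kappa\langle\bu'_\rho-\bu_\rho,\,\bd_\tx+\bd_\rx\rangle}$; the $\bu'$-integral thus reduces to $W(\bu)\,\e^{\jj\kappa F_0(\bu)}$ with $W(\bu)=\iint_\Uset\e^{-\sigma_z^2\kappa_z^2(1-C(\rho))}\cos\!\big(\kappa\langle\bu'_\rho-\bu_\rho,\,\bd_\tx+\bd_\rx\rangle\big)\dd x'\dd y'$ a real, index-independent factor (its angular part being a Bessel $J_0$) that varies slowly over $\Uset$. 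Substituting into \eqref{eq:correlation1 joint}, applying the definition \eqref{eq: spatial correlation} of the normalized spatial covariance---whose diagonal entry $n=n'$, $m=m'$ must equal one, which cancels $|c_I|^2$ and the slowly varying $W(\bu)$ and fixes the prefactor $1/|\Uset|$---and invoking the local-coordinate reduction used in the proof of Lemma~\ref{lem: tworxtx}, namely $F_0(\bu)=\|\bu_{\rx,m}-\bu_{\rx,m'}\|\sin\theta_\rx^l(\bu)+\|\bu_{\tx,n}-\bu_{\tx,n'}\|\sin\theta_\tx^l(\bu)$ under Assumption~1, yields exactly \eqref{eq: lemma 1}.

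The step I expect to be the main obstacle is the handling of the residual first-order term $-\rho(\cos\Psi_\tx+\cos\Psi_\rx)$: unlike the $\bigo(\rho^2)$ corrections it is \emph{not} uniformly negligible over the kernel's support, since $\kappa\rho$ can be of order one or larger even at the correlation length. Its removal rests on three points that each require care: (i) the kernel depends on $\bu'$ only through $\rho$, which lets the odd part of the first-order phase cancel under reflection about $\bu_\rho$; (ii) the near-collinearity of the antennas within a correlated cluster, which makes the surviving even factor $W(\bu)$ independent of the antenna indices; and (iii) the slow spatial variation of $W(\bu)$ compared with the rapid oscillation of $\e^{\jj\kappa F_0(\bu)}$, which lets $W(\bu)$ be pulled through the surface integral and cancelled in the normalization. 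I also note that in Regime~3 the kernel concentrates ever more tightly at $\rho=0$ as $g$ grows, in which limit $\cos(\kappa\langle\bu'_\rho-\bu_\rho,\,\bd_\tx+\bd_\rx\rangle)\to1$ and $W(\bu)$ tends to a constant, so the argument becomes asymptotically exact. Once these points are settled, the remaining manipulations are identical to the uncorrelated case, which is why the final expression coincides with \eqref{eq: lemma 1}.
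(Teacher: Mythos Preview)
Your proof is correct and follows essentially the same route as the paper's: drop $|\bar{c}_r(g)|^2$ in Regime~3, invoke Assumption~2 to discard the quadratic-in-$\rho$ terms from $F$, factor the quadruple integral into an outer integral carrying $F_0(\bu)$ and an inner $\bu'$-integral that is (approximately) independent of the antenna indices and hence cancels upon normalization, then reduce $F_0(\bu)$ via Assumption~1 exactly as in Lemma~\ref{lem: tworxtx}. You are actually more careful than the paper about the residual first-order term $-\rho(\cos\Psi_\tx+\cos\Psi_\rx)$---the paper simply declares the inner integral ``a constant'' and moves on, whereas you justify its index-independence via the close antenna spacing and handle its $\bu$-dependence through the slowly-varying weight $W(\bu)$ (your Bessel-$J_0$ observation in fact anticipates the paper's later Appendix~\ref{app: power regime 3}).
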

\begin{IEEEproof}
    \ifarxiv
    The proof is provided in Appendix~\ref{app: tworxtx correlation}.
    \else
    The proof is provided in \cite[Appendix~C]{delbari2025nearfield} which is the extended version of this paper in Arxiv. 
    \fi
\end{IEEEproof}
Lemma~\ref{lem: tworxtx correlation} reveals that the spatial correlation remains the same as in the case without surface correlation for a sufficiently large \gls{NS} under Assumptions 1 and 2. Next, we investigate whether surface correlation impacts the channel power gain.

\textbf{Channel power gain:} In this part, we focus on the situation with only one \gls{Tx} and one \gls{Rx}. The channel power gain in Regime 1 can be determined using Theorem~\ref{Theorem Gaussian distribution}. However, in Regime 3, the channel power gain generally depends on $C(\rho)$. A typical model for $C(\rho)$ is the Gaussian function, i.e., $C(\rho) = e^{-\frac{\rho^2}{\ell^2}}$, where $\ell$ denotes the correlation length of the \gls{NS} \cite{beckmann1987scattering, shi2017recovery}. Under this model, we can approximate $1 - C(\rho) \approx \frac{\rho^2}{\ell^2}$ for small $\frac{\rho}{\ell}$. Note that this approximation becomes invalid when $\ell$ is very small. To address this, we define a threshold $\ell_{\min}$ such that the approximation holds for $\ell>\ell_{\min}$. Using this approximation in \eqref{eq:correlation1 joint}, we can derive the following theorem.
\begin{theorem}
    \label{theorem power regime 3}
    Assuming $1 - C(\rho) \approx \frac{\rho^2}{\ell^2}$, we can simplify \eqref{eq:correlation1 joint} and derive the channel power gain for Regime 3 as follows:
    \begin{equation}
    \label{eq: theorem power regime 3}
        \Ex\{|c_r|^2\}=\frac{P_\rx}{P_\tx}=\frac{B\ell^2}{2\kappa_z^2\sigma_z^2}\e^{-\frac{(\kappa_\rho\ell)^2}{(2\kappa_z\sigma_z)^2}},\ell_{\min}<\ell<\ell_{\max},
    \end{equation}
    where $B$ is a constant, $\kappa_\rho$ is a linear function of $\kappa$ whose slope depends on positions $\bu_\tx$ and $\bu_\rx$, and $\ell_{\max}=2\sigma_z\frac{\kappa_z}{\kappa_\rho}$.
\end{theorem}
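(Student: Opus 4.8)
The plan is to specialize the general covariance formula \eqref{eq:correlation1 joint} to the single‑Tx/single‑Rx setting relevant for the channel power gain, reduce the resulting fourfold surface integral to a two‑dimensional Gaussian integral, and then read off the result. Since we are in Regime~3, the deterministic amplitude satisfies $\bar{c}_{r}(g)\to 0$ by Theorem~\ref{Theorem Gaussian distribution}, so \eqref{eq:correlation1 joint} with $n=n'$ and $m=m'$ reduces to $\Ex\{|c_r|^2\}=|c_I|^2\iint_{\bu\in\Uset}\iint_{\bu'\in\Uset}\e^{-\sigma_z^2\kappa_z^2(1-C(\rho))+\jj\kappa F}\,\dd x\,\dd y\,\dd x'\,\dd y'$. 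For $n=n'$ and $m=m'$ the zeroth‑order differences in \eqref{eq: F simplification} cancel, leaving $\kappa F=-\kappa\rho\big(\cos\Psi_\tx+\cos\Psi_\rx\big)+\bigo(\rho^2)$; by Assumption~2 the residual $\bigo(\rho^2)$ curvature terms are negligible over the effective support of the integrand, so the phase becomes linear in the in‑plane displacement $\boldsymbol\rho\triangleq\bu'_\rho-\bu_\rho$. Writing $\rho\cos\Psi_\tx+\rho\cos\Psi_\rx=(\hat{\bu}_\tx^{\parallel}+\hat{\bu}_\rx^{\parallel})^\Trans\boldsymbol\rho$, with $\hat{\bu}_\tx^{\parallel},\hat{\bu}_\rx^{\parallel}$ the in‑plane components of the unit vectors pointing from the surface point toward $\bu_\tx$ and $\bu_\rx$, identifies the effective wavenumber $\kappa_\rho\triangleq\kappa\,\|\hat{\bu}_\tx^{\parallel}+\hat{\bu}_\rx^{\parallel}\|$ as a linear function of $\kappa$ whose slope is fixed by the reflection geometry, i.e., by $\bu_\tx$ and $\bu_\rx$.

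I would then change variables from $(\bu,\bu')$ to $(\bu,\boldsymbol\rho)$ and use the ``extremely large surface'' hypothesis twice. Inserting $1-C(\rho)\approx\rho^2/\ell^2$, the kernel $\e^{-\kappa_z^2\sigma_z^2\rho^2/\ell^2}$ confines $\boldsymbol\rho$ to a disc of radius $\sim\ell/(\kappa_z\sigma_z)$, which in Regime~3 ($\kappa_z\sigma_z\gg1$) is tiny compared with the linear dimensions of $\Uset$; hence for every base point $\bu$ in the interior of the surface the inner $\boldsymbol\rho$‑integral may be extended to all of $\Rset^2$, the boundary layer contributing a vanishing relative correction. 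In the same regime $\hat{\bu}_\tx^{\parallel}+\hat{\bu}_\rx^{\parallel}$ varies slowly and is frozen at the specular reflection point, so $\kappa_\rho$ is treated as constant, and the outer integral over $\bu$ contributes the surface area $|\Uset|$.

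The core computation is the textbook integral $\int_{\Rset^2}\e^{-a\|\boldsymbol\rho\|^2}\,\e^{-\jj\kappa(\hat{\bu}_\tx^{\parallel}+\hat{\bu}_\rx^{\parallel})^\Trans\boldsymbol\rho}\,\dd\boldsymbol\rho=\frac{\pi}{a}\,\e^{-\kappa_\rho^2/(4a)}$ with $a=\kappa_z^2\sigma_z^2/\ell^2$ (equivalently, carrying out the angular integral first turns it into $2\pi\int_0^\infty\e^{-a\rho^2}J_0(\kappa_\rho\rho)\rho\,\dd\rho$, which has the same value). Substituting $a$ gives $\Ex\{|c_r|^2\}=|c_I|^2|\Uset|\,\frac{\pi\ell^2}{\kappa_z^2\sigma_z^2}\,\e^{-(\kappa_\rho\ell)^2/(2\kappa_z\sigma_z)^2}$, which is exactly \eqref{eq: theorem power regime 3} once $2\pi|c_I|^2|\Uset|$ (independent of $\ell$) is absorbed into $B$ and $\Ex\{|c_r|^2\}=P_\rx/P_\tx$ is recalled from \eqref{eq: channel power gain without surface correlation}. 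To close the argument I would delineate the validity window: $\ell>\ell_{\min}$ is what makes $1-C(\rho)\approx\rho^2/\ell^2$ accurate over the effective support of the kernel, while $\ell<\ell_{\max}=2\sigma_z\kappa_z/\kappa_\rho$ keeps the linear phase $\kappa_\rho\rho$ of order one across that support — equivalently, keeps the exponent $(\kappa_\rho\ell/(2\kappa_z\sigma_z))^2$ below unity — which is precisely what the linearization of $F$ and the extension‑to‑$\Rset^2$ step need in order to be self‑consistent.

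The step I expect to be the main obstacle is this decoupling: one has to argue carefully that, for a large but finite surface, (i) the inner $\boldsymbol\rho$‑integral genuinely loses its dependence on the base point $\bu$ up to a relative error that vanishes with $|\Uset|$, and (ii) the geometry‑dependent vector $\hat{\bu}_\tx^{\parallel}+\hat{\bu}_\rx^{\parallel}$, hence $\kappa_\rho$, can be replaced by its specular‑point value without incurring an error comparable to the exponential factor itself. Controlling these approximations together with the discarded $\bigo(\rho^2)$ and $\bigo(\rho^3)$ terms of \eqref{eq: expand ut-u' in ut-u} is precisely what dictates the upper bound $\ell_{\max}$; the remainder is bookkeeping plus a standard Gaussian/Hankel integral.
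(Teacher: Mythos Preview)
Your proposal is correct and follows essentially the same route as the paper: specialize \eqref{eq:correlation1 joint} to $n=n'$, $m=m'$ in Regime~3, linearize $F$ in $\boldsymbol\rho$ via \eqref{eq: F simplification}, change variables to $(\bu,\boldsymbol\rho)$, extend the inner integral to $\Rset^2$, and evaluate it (the paper does the angular integral first to get $J_0(\kappa_\rho\rho)$, which you mention as the equivalent alternative). The one substantive difference is the justification of $\ell_{\max}$: the paper observes that $\Ex\{|c_r|^2\}$ should be physically monotone in $\ell$, differentiates the resulting expression, and identifies $\ell_{\max}=2\sigma_z\kappa_z/\kappa_\rho$ as the location of its maximum, beyond which the formula violates monotonicity; your self‑consistency argument (keeping the exponent below one) happens to land on the same value but is a different rationale.
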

\begin{IEEEproof}
    \ifarxiv
    The proof is provided in Appendix~\ref{app: power regime 3}.
    \else
    The proof is provided in \cite[Appendix~D]{delbari2025nearfield} which is the extended version of this paper in Arxiv. 
    \fi
\end{IEEEproof}
Theorem~\ref{theorem power regime 3} provides a mathematical expression for the channel power gain as a function of the correlation length. To determine the constant $B$, we use the fact that, when $\ell=\ell_{\max}$, the surface becomes effectively smooth. In this regime, the surface correlation length is large enough for the roughness to be neglected, allowing us to apply the model for Regime 1 to derive $B$. Therefore, the power gain should equal $|\bar{c}_r(0)|^2$, as derived in \eqref{eq:coherent component a}. Substituting $\ell=\ell_{\max}$ into \eqref{eq: theorem power regime 3}, yields $|\bar{c}_r(0)|^2=\frac{2B}{\kappa_\rho^2}\e^{-1}$, from which we solve for $B$ as $B=\frac{|\bar{c}_r(0)|^2\kappa_\rho^2}{2}\e$. To find $\ell_{\min}$, we refer to the discussion in Section~\ref{Stochastic Component}\textcolor{blue}{A} regarding the channel power gain using the law of conservation of energy. Specifically, when $\ell = \ell_{\min}$, the power should match $|\tilde{c}_{r,+\infty}|^2$. Substituting into the power expression, we obtain the following expression:
\begin{equation}
\frac{|\bar{c}_r(0)|^2\kappa_\rho^2\ell_{\min}^2}{4\kappa_z^2\sigma_z^2}\e^{1-\frac{(\kappa_\rho\ell_{\min})^2}{(2\kappa_z\sigma_z)^2}}=|\tilde{c}_{r,+\infty}|^2.
\end{equation}
To simplify notation, define $S\defeq\frac{(\kappa_\rho\ell)^2}{(2\kappa_z\sigma_z)^2}$, $S_{\max}\defeq\frac{(\kappa_\rho\ell_{\max})^2}{(2\kappa_z\sigma_z)^2}=1$, and $S_{\min}$ to be the solution of $S_{\min}e^{1-S_{\min}}=\frac{|\tilde{c}_{r,+\infty}|^2}{|\bar{c}_{r}(0)|^2}$. We are now ready to propose the following expression for the channel power gain:
\begin{equation}
\label{eq: S function}
    \frac{P_\rx}{P_\tx}=\begin{cases}
        |\bar{c}_{r}(0)|^2, &\text{ for $S\geq 1$},\\
        |\bar{c}_{r}(0)|^2Se^{1-S}, &\text{ for $S_{\min}< S< 1$},\\
        |c_{n,r,+\infty}|^2, &\text{ for $S\leq S_{\min}$}.
    \end{cases}
\end{equation}
 The proposed channel power gain is expressed as a function of the reflector parameters $\kappa_\rho$, $\kappa_z$, $\sigma_z$, and $\ell$, where the interplay among these parameters determines the value of parameter $S$. As $S\to1$, the \gls{NS} becomes smooth, whereas, when $S\to0$, the \gls{NS} becomes increasingly rough and the channel power gain decreases.

\subsection{Summary and Discussion}
\label{Summary and Discussion}
The proposed overall \gls{NF} \gls{MIMO} channel model is summarized in \eqref{Eq:MIMO_our_model_location},
\begin{figure*}[t]
\begin{equation}
\label{Eq:MIMO_our_model_location}
	\bH_{\NF} = \underbrace{c_0\bH_{\NF}^\LOS(\bU_0)}_{\text{deterministic}} + \sum_{s=1}^S \underbrace{\hat{c}_s}_{\text{stochastic}}\underbrace{\bH^\scat_s(\hat{\bU}_s)}_{\text{deterministic}}+\sum_{r=1}^R \Big(\underbrace{\bar{c}_{r}\bar{\bH}_{r}(\bU_r)}_{\text{deterministic}}+\underbrace{\tilde{c}_{r}\tilde{\bH}_{r}}_{\text{stochastic}}\Big),\quad
\end{equation}

\begin{IEEEeqnarray}{ll}\label{eq: summarized all channels}
    &c_0\propto\|\bu_{\tx}-\bu_{\rx}\|^{-1}\!\!,\IEEEyesnumber\IEEEyessubnumber\quad \hat{c}_s\propto(\|\bu_{\rx}-\bu_{s}\|\|\bu_s-\bu_{\tx}\|)^{-1}\!\!,\IEEEyessubnumber \quad \bar{c}_r\propto\|\bu_{\tx}-\bu^r_{\vrx}\|^{-1}\!\!,\IEEEyessubnumber \quad \tilde{c}_r\propto(\|\bu_{\rx}-\bu_c^r\|\|\bu_c^r-\bu_{\tx}\|)^{-1}\!\!,\IEEEyessubnumber\label{eq: summarized all channels a}\\
    &\!\!\!\!\!\!\!\!\!\!\!\!\!\!\!\!\![\bH_{\NF}^\LOS]_{m,n} = \, \e^{\jj\kk\|\bu_{\rx,m}-\bu_{\tx,n}\|}\!\!,\IEEEyessubnumber\quad\! [\bH_s^\scat]_{m,n}=\e^{\jj\kk(\|\bu_{\tx,n}-\bu_{s}\|+\|\bu_{\rx,m}-\bu_{s}\|)}\!\!,\IEEEyessubnumber\quad\! \big[\bar{\bH}_{r}\big]_{m,n} \!\!\!\!\!= \e^{\jj\kk\|\bu^r_{\vrx,m}-\bu_{\tx,n}\|}\!\!,\IEEEyessubnumber\quad\! \big[\tilde{\bH}_{r}\big]_{m,n}\!\!\!\!\!\!\!\!\!\!\sim\sCN(0,1).\IEEEyessubnumber
\end{IEEEeqnarray}
\vspace{0.5em}\hrule
\end{figure*}
where $\bU_0=[\bu_\rx, \bu_\tx]$, $\hat{\bU}_s=[\bu_\tx,\bu_\rx,\bu_s]$, and $\bU_r=[\bu_\tx,\bu_\rx,{\bu^r_\vrx},{\bu^r_\vtx}]$. There are a number of insights that can be extracted from \eqref{Eq:MIMO_our_model_location} and are summarized here:
\begin{itemize}
    \item \textbf{Effective channels:} Let us first compare the channel structures and the corresponding coefficients in \eqref{eq: summarized all channels}.
By considering the expressions in \eqref{eq: summarized all channels a}, we typically have the relation
\[
\underbrace{c_0>\bar{c}_r}_{\text{Effective channels}}\overset{(a)}{\gg}\underbrace{\tilde{c}_r\overset{(b)}{>}\hat{c}_s}_{\text{Scattering channels}},
\]
where inequality $(a)$ holds due to the double path loss for scattering channels \gls{wrt} the single path loss of the effective channels. In addition, inequality $(b)$ holds due to the larger area of the reflector compared to the point scatter.
\item \textbf{Features of the structures:} Interestingly, the generalized \gls{NF} \gls{MIMO} channel model in (\ref{Eq:MIMO_our_model_location}) reveals that all effective channel components, i.e., the elements of $\bH^\LOS_\NF$ and $\bar{\bH}_r$, are functions of the real or virtual positions of the \gls{Tx} and \gls{Rx} antennas. Furthermore, the key difference between a reflector scatterer and a point scatterer lies in the nature of the corresponding channel structures, as the former leads to a deterministic channel component, while the latter results in a purely stochastic one. By defining angular spread $\theta_c$ in Corollary~\ref{corol: sinc} in terms of the distance and area of the reflector, we can characterize the spatial correlation. This, in turn, allows us to determine whether a given scatterer behaves as a point scatterer or a reflector scatterer in a specific setup.
\item \textbf{\gls{NF} beam focusing:} The aforedescribed \gls{Tx}/\gls{Rx} real and virtual positions play a crucial role in enabling efficient \gls{NF} beam training \cite{delbari2024far,liu2023near,lu2023near,ramezani2023near,NF_beam_tracking}. In practice, \gls{NF} beam training allows the \gls{Tx} and \gls{Rx} to learn and focus beams toward the key locations \(\bu_\tx\), \(\bu_\rx\), \(\bu^r_\vrx\), and \(\bu^r_\vtx\) \cite{George2022}. Once the \gls{NF} beamformer leads to a sufficiently high channel power gain, conventional pilot-based methods can be used to estimate the effective end-to-end channel \cite{jamali2022}.
\item \textbf{Generalized \gls{NF} \gls{MIMO} Rician model:} Let us first rewrite \eqref{Eq:MIMO_our_model_location} in a form similar to the Rician channel model:
\begin{equation}
\label{Eq:MIMO_our_model_location Similar Ricean}
	\!\!\!\!\bH_{\NF}\! =\! c_0\!\!\left(\!\!\bH_{\NF}^\LOS \!\!+\!\! \sum_{s=1}^S \!\hat{k}_s\bH^\scat_s\!\!+\!\!\sum_{r=1}^R \!\left(\bar{k}_{r}\bar{\bH}_{r}\!+\!\tilde{k}_{r}\tilde{\bH}_{r}\right)\!\!\right)\!,
\end{equation}
where $\hat{k}_s\propto \frac{\hat{l}_s\|\bu_{\tx}-\bu_{\rx}\|}{\|\bu_{\rx}-\bu_{s}\|\|\bu_s-\bu_{\tx}\|}$, $\bar{k}_r\propto \frac{\bar{l}_r\|\bu_{\tx}-\bu_{\rx}\|}{\|\bu_{\vrx}-\bu_{\tx}\|}$, and $\tilde{k}_r\propto \frac{\tilde{l}_r\|\bu_{\tx}-\bu_{\rx}\|}{\|\bu_{\rx}-\bu_c^r\|\|\bu_c^r-\bu_{\tx}\|}$ with $\hat{l}_s$, $\bar{l}_r$, and $\tilde{l}_r$ incorporating any losses of the channels except the distance. The generalized Rician model in \eqref{Eq:MIMO_our_model_location Similar Ricean} for \gls{NF} scenarios comprises the \gls{LOS} channel as well as the different \gls{nLOS} components generated by point scatterers and surface reflectors, respectively. The values of the generalized Rician factors, $\hat{k}_s$, $\bar{k}_r$, and $\tilde{k}_r$, depend on the specific scenario and can be chosen based on simulation or experimental data.
\end{itemize}

\section{When Is Exploiting NLOS Paths Beneficial?}
\label{sec: When Exploiting nLOS Paths Is Beneficial?}
\begin{figure}
    \centering
    \includegraphics[width=0.35\textwidth]{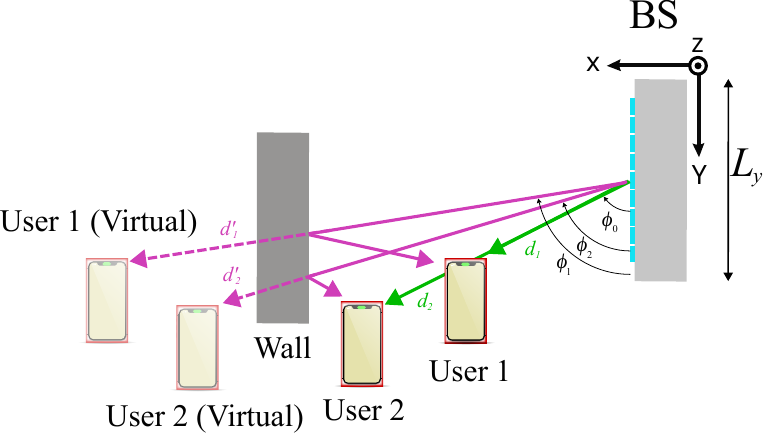}
    \caption{Two-dimensional representation of the three-dimensional scenario.}
    \label{fig:system 2D}
    \vspace{-0.5cm}
\end{figure}
In Section~\ref{Summary and Discussion}, we introduced the proposed \gls{NF} \gls{MIMO} channel model and argued that \gls{NF} beam focusing can effectively identify dominant propagation paths to serve \glspl{MU}. In this section, we compare the performance achieved by exploiting \gls{LOS} and dominant \gls{nLOS} paths in multi-user scenarios based on the corresponding \glspl{SINR}. Although the channel gain of an \gls{nLOS} path is generally weaker (more attenuated) than that of a \gls{LOS} path, it can still be exploited to spatially separate \glspl{MU} using \gls{SDMA}. The effectiveness of this approach, however, depends on the spatial resolution of the paths, which is primarily influenced by: (i) the angular separation of the paths, and (ii) the beamwidth achievable by the \gls{BS}, which is in turn constrained by its array size. This leads to a fundamental trade-off: Using \gls{nLOS} paths offers improved spatial separability at the cost of reduced channel gain, while relying on the stronger \gls{LOS} path may result in limited spatial resolution. In the remainder of this section, we analyze this trade-off through a detailed example.

To obtain insight, we study a simple but representative setting where two \glspl{MU} are served by a \gls{BS} at the same time and frequency\footnote{In a practical scenario, there may be multiple \glspl{MU}. In this case, \glspl{MU} may be first grouped such that those within the same group share the same resource. Different groups can then be served using orthogonal resources, enabled by \gls{TDMA} or \gls{FDMA}. In this work, we only consider a single group of two \glspl{MU}.}, and both are located at the same angular dimension from the \gls{BS}, as illustrated in Fig.~\ref{fig:system 2D}. It is known that, while for this scenario, spatial multiplexing cannot be achieved in the \gls{FF} regime, it may still be possible to spatially separate \glspl{MU} in the \gls{NF} regime, nonetheless, the corresponding \gls{SINR} significantly depends on the \gls{BS}-\gls{MU} and \gls{MU}-\gls{MU} distances \cite{ramezani2023near}. We are interested in studying whether and when the additional spatial resolution offered by the \gls{nLOS} links is beneficial in this challenging \gls{NF} scenario.

We consider a two-dimensional setting for simplicity to be able to provide insightful analytical results\footnote{In Section~\ref{Performance Evaluation}, we will consider a realistic 3D system and numerically demonstrate the regime where the use of \gls{nLOS} \gls{NF} paths is beneficial.}.
Assume the \gls{BS} is equipped with two \glspl{ULA}, each of length $L_y=N_\tx\frac{\lambda}{2}$ and aligned along the $\y$-axis, where the upper array is steered to serve the first \gls{MU}, while the lower array serves the second \gls{MU}. We theoretically analyze the \gls{SINR} of the first \gls{MU} where the data from the second \gls{MU} is treated as interference (a similar analysis can be done for the second \gls{MU}, which is omitted here due to space constraints). We further assume that the \gls{BS} transmit power is divided equally between both \glspl{MU} ($P=P_t/2$), and $N_r=1$ for both \glspl{MU} in \eqref{Eq:IRSbasic}. This leads to
\begin{equation}
\label{eq: SINR}
    \SINR_1=\frac{|\bh_1^\Herm\bq_1|^2P}{|\bh_1^\Herm\bq_{2}|^2P+\sigma_n^2}=\frac{|\bh_1^\Herm\bq_1|^2}{|\bh_1^\Herm\bq_{2}|^2+\frac{\sigma_n^2}{P}}.
\end{equation}
To simplify the analysis and gain intuition, we consider \gls{NF} beamforming, where
$\bq_k=\sqrt{\frac{1}{N_\tx}}[\e^{\jj \omega_{1}^{(k)}}, \cdots, \e^{\jj \omega_{N_\tx}^{(k)}}]^\Trans,\,\forall k\in\{1,2\}$. Assuming $R=1$ and $S=0$ in \eqref{Eq:MIMO_our_model_location Similar Ricean} and approximating $\tilde{k}_r\approx0$, the channel for the first \gls{MU} becomes $\bh_1=c_0(\bh_1^\LOS+\bar{k}_1\bar{\bh}_1)$, where $(\bh_1^\LOS)^\Herm=\sqrt{\frac{1}{N_\tx}}\ba_\tx^\Trans(\bu_\rx^{(1)})$, with $\bu_\rx^{(1)}$ denoting the position of the first \gls{MU}, and $\bar{\bh}_1^\Herm=\sqrt{\frac{1}{N_\tx}}\ba_\tx^\Trans(\bu_\vrx^{(1)})$, with $\bu_\vrx^{(1)}$ representing the virtual (reflected) location of the first \gls{MU} on the wall. 

In the following, we analyze \(\SINR_1\) under two different scenarios:
\textit{i) \gls{LOS}:} The \gls{BS} exploits only the \gls{LOS} paths for beamforming.
\textit{ii) \gls{nLOS}:} The \gls{BS} transmits exclusively along the \gls{nLOS} paths\footnote{We exclude the mixed case in which one \gls{MU} is served via an \gls{LOS} path and the other via a \gls{nLOS} path, since the resulting \gls{SINR} for each \gls{MU} is comparable to one of the two considered scenarios.}. 

\subsection{Scenario \textit{i)}: Exploiting Only LOS Paths}

In scenario \textit{i)},  each array serves the corresponding \gls{MU} via the \gls{LOS} path. Referring to Fig.~\ref{fig:system 2D}, let the separation between the \glspl{MU} be
$d \triangleq |d_2 - d_1|$, with $d_1$ and $d_2$ representing the distances from the center of the \gls{BS} to \gls{MU} 1 and \gls{MU} 2, respectively. We optimize \(\bq_1\) and \(\bq_2\) to focus exclusively on the \gls{LOS} paths of \gls{MU}~1 and \gls{MU}~2, respectively, and denote the corresponding beamformers by \(\bq_1^{\LOS}\) and \(\bq_2^{\LOS}\). Substituting these into \eqref{eq: SINR}, yields:
\begin{equation}
    \label{eq: SINR 2}
\SINR_1^\LOS=\frac{|h_{11}^\LOS|^2+|\hat{h}_{11}^\nLOS|^2}{|h_{12}^\LOS|^2+|\hat{h}_{12}^\nLOS|^2+\frac{\sigma^2_n}{P}},
\end{equation}
where we defined $|h_{11}^\LOS|^2\defeq|c_0(\bh_1^\LOS)^\Herm\bq_1^\LOS|^2$, $|h_{12}^\LOS|^2\defeq|c_0(\bh_1^\LOS)^\Herm\bq_{2}^\LOS|^2$, and the superscript $(\cdot)^\LOS$ indicates the usage of the \gls{LOS} link in the \gls{BS}-\gls{MU} channel. The notation $\hat{(\cdot)}$ refers to contributions from side lobes (through \gls{nLOS} path in scenario \textit{i)}), i.e., $|\hat{h}_{11}^\nLOS|^2=|c_0\bar{k}(\bar{\bh}_1)^\Herm\bq_1^\LOS|^2$ and $|\hat{h}_{12}^\nLOS|^2=|c_0\bar{k}(\bar{\bh}_1)^\Herm\bq_2^\LOS|^2$. We neglect the power received through the side lobes, assuming the \gls{BS} array is sufficiently large\footnote{We note that side lobes of both the desired signal and the interference may also pass through the \gls{nLOS} link. However, if $L_y$ is sufficiently large, their impact becomes negligible \gls{wrt} the main lobes, which are the primary focus of our analysis here for both scenarios \textit{i)} and \textit{ii)}. This observation is also confirmed by the simulation results shown in Fig.~\ref{fig:N_prop}.}, i.e., $|\hat{h}_{11}^\nLOS|^2\approx|\hat{h}_{12}^\nLOS|^2\approx0$.

Assuming each array optimizes its element phases to coherently combine at the location of the intended \gls{MU} in the \gls{NF}, we set $\omega_n^{(k)}=-\kk\|\bu_{\tx,n}-\bu_\rx^{(k)}\|$ and expand the phase as \cite[Lemma~1]{delbari2024far}
\begin{equation}
\label{eq: phase shift}
    -\kappa d_k\!\!\left(\!\!\underbrace{\frac{\|\bu_{\tx,n}\|}{d_k}\cos\phi_0}_{\text{Linear term}}\!+\!\underbrace{\left(\frac{\|\bu_{\tx,n}\|}{d_k}\right)^2\!\!\frac{\sin^2\phi_0}{2}}_{\text{Quadratic term}}\!+\!\!\underbrace{\bigo\!\!\left(\!\frac{\|\bu_{\tx,n}\|}{d_k}\!\right)^3\!}_{\text{Non-quadratic terms}}\right)\!,
\end{equation}
where $k\in\{1,2\}$. Substituting \eqref{eq: phase shift} into $\bq_1^\LOS$ and $\bq_2^\LOS$ for each \gls{MU}, we obtain
\begin{equation}
    \label{eq: SINR 3}
\SINR_1^\LOS=\frac{|c_0|^2\left|\frac{1}{N_\tx}
\sum_{n=1}^{N_\tx}\e^{\jj(\kk\|\bu_{\tx,n}-\bu_\rx^{(1)}\|+\omega_n^{(1)})}\right|^2}{|c_0|^2\left|\frac{1}{N_\tx}\sum_{n=1}^{N_\tx}\e^{\jj(\kk\|\bu_{\tx,n}-\bu_\rx^{(1)}\|+\omega_n^{(2)})}\right|^2+\frac{\sigma^2_n}{P}}.
\end{equation}
We assume the cubic and higher‑order terms (non-quadratic terms) are negligible for both \glspl{MU}\footnote{See \cite[Fig.~3]{delbari2024far} for a quantitative range (frequency, aperture size, and distances) where this approximation holds.}. Under this assumption, the numerator of \eqref{eq: SINR 3} is approximately $|c_0|^2$. In the denominator, the linear terms for the two \glspl{MU} cancel, leaving only the quadratic mismatch. Approximating the discrete sum by an integral (large‑\(N_\tx\) continuous‑aperture model), we obtain
\begin{align}
    \label{eq: SINR 4}
\SINR_1^\LOS&=\left(\Big|\frac{1}{L_y}\int_{-\frac{L_y}{2}}^{\frac{L_y}{2}}\e^{\jj a_1 y^2} \dd y\Big|^2+\frac{\sigma^2_n}{P|c_0|^2}\right)^{-1}\nonumber\\
&=\left(\frac{\pi}{a_1L_y^2}\Big|\erfi\Big(\sqrt{\jj a_1}\frac{L_y}{2}\Big)\Big|^{2}+\frac{\sigma^2_n}{P|c_0|^2}\right)^{-1}\!\!,
\end{align}
where $\erfi(\cdot)$ is the imaginary error function and $a_1=\kappa \frac{\sin^2\phi_0}{2}\frac{d}{d_1d_2}$. As observed from \eqref{eq: SINR 4}, increasing either $a_1$ or $L_y$ reduces the interference term. However, achieving this requires increasing the distance between the \glspl{MU} ($d$) or decreasing their respective distances from the \gls{BS} ($d_1$ and $d_2$). In the following section, we show that these requirements can be relaxed by exploiting the \gls{nLOS} path.

\subsection{Scenario \textit{ii)}: Exploiting Only NLOS Paths}
In scenario \textit{ii)}, each array communicates with its designated \gls{MU} solely through the \gls{nLOS} path. As illustrated in Fig.~\ref{fig:system 2D}, the virtual positions of the two \glspl{MU} are located at different azimuth angles $\phi_1$ and $\phi_2$ \gls{wrt} the \gls{BS}. We design \(\bq_1\) and \(\bq_2\) to exclusively target the \gls{nLOS} paths of \gls{MU}~1 and \gls{MU}~2, respectively, and denote the corresponding beamformers as \(\bq_1^{\nLOS}\) and \(\bq_2^{\nLOS}\). Substituting these beamformers into \eqref{eq: SINR} gives
\begin{equation}
    \label{eq: SINR 2 2}
\SINR_1^\nLOS=\frac{|h_{11}^\nLOS|^2+|\hat{h}_{11}^\LOS|^2}{|h_{12}^\nLOS|^2+|\hat{h}_{12}^\LOS|^2+\frac{\sigma^2_n}{P}},
\end{equation}
where $|h_{11}^\nLOS|^2 \triangleq |c_0\bar{k}_1\bar{\bh}_1^\Herm\bq_1^\nLOS|^2, \quad
|h_{12}^\nLOS|^2 \triangleq |c_0\bar{k}_1\bar{\bh}_1^\Herm\bq_{2}^\nLOS|^2,$
and the superscript \((\cdot)^\nLOS\) indicates that the \gls{nLOS} link of the \gls{BS}–\gls{MU} channel is used. The notation $\hat{(\cdot)}$ represents contributions arising from side lobes via the \gls{LOS} path in this scenario. We again disregard the power received via side lobes, under the assumption that the \gls{BS} array is sufficiently large, i.e., $|\hat{h}_{11}^\LOS|^2=|c_0(\bh_1^\LOS)^\Herm\bq_1^\nLOS|^2\approx0$ and $|\hat{h}_{12}^\LOS|^2=|c_0(\bh_1^\LOS)^\Herm\bq_2^\nLOS|^2\approx0$. The validity of this approximation is confirmed by our simulation results shown in Fig.~\ref{fig:N_prop}.

We again assume each array optimizes its element phases to be coherently added at the corresponding \gls{MU} in the \gls{NF} regime. We write $\omega_n^{(k)}=-\kk\|\bu_{\tx,n}-\bu_\rx^{(k)}\|$ and expand the phase as \cite[Lemma~1]{delbari2024far}
\begin{equation}
\label{eq: phase shift 2}
    -\kappa d_k\!\!\left(\!\!\underbrace{\frac{\|\bu_{\tx,n}\|}{d_k}\cos\phi_k}_{\text{Linear term}}\!+\!\underbrace{\left(\frac{\|\bu_{\tx,n}\|}{d_k}\right)^2\!\!\frac{\sin^2\phi_k}{2}}_{\text{Quadratic term}}\!+\!\!\underbrace{\bigo\!\!\left(\!\frac{\|\bu_{\tx,n}\|}{d_k}\!\right)^3\!}_{\text{Non-quadratic terms}}\right)\!\!,
\end{equation}
where $k\in\{1,2\}$. Substituting \eqref{eq: phase shift 2} into $\bq_1^\nLOS$ and $\bq_2^\nLOS$ for each \gls{MU}, we obtain
\begin{equation}
    \label{eq: SINR 3 2}
\SINR_1^\nLOS=\frac{|c_0\bar{k}_1|^2\left|\frac{1}{N_\tx}
\sum_{n=1}^{N_\tx}\e^{\jj(\kk\|\bu_{\tx,n}-\bu_\rx^{(1)}\|+\omega_n^{(1)})}\right|^2}{|c_0\bar{k}_1|^2\left|\frac{1}{N_\tx}\sum_{n=1}^{N_\tx}\e^{\jj(\kk\|\bu_{\tx,n}-\bu_\rx^{(1)}\|+\omega_n^{(2)})}\right|^2+\frac{\sigma^2_n}{P}}.
\end{equation}
By neglecting the cubic and higher‑order terms for both \glspl{MU}, the numerator of \eqref{eq: SINR 3 2} is approximately $|c_0\bar{k}_1|^2$. In the denominator, this time the linear terms for the two \glspl{MU} do not cancel each other since $\phi_1\neq\phi_2$. Similar to scenario \textit{i)}, by approximating the discrete sum by an integral (large‑\(N_\tx\) continuous‑aperture model), we obtain

\begin{align}
    \label{eq: SINR 4 2}
\SINR_1^\nLOS&=\left(\Big|\frac{1}{L_y}\int_{-\frac{L_y}{2}}^{\frac{L_y}{2}}\e^{\jj a_2y^2+\jj b y} \dd y\Big|^2+\frac{\sigma^2_n}{P|c_0\bar{k}_1|^2}\right)^{-1}\nonumber\\
&\!\!\!\!\!\!\!\!\!\!\!\!\!\!\!\!\!\!\!\!\!\!\!\!\!\!\!\!=\left(\frac{\pi}{4a_2L_y^2}\Big|\erfi\Big(\sqrt{\frac{\jj}{4a_2}}(2a_2y+b)\Big)\big|_\frac{-L_y}{2}^\frac{L_y}{2}\Big|^{2}+\frac{\sigma^2_n}{P|c_0\bar{k}_1|^2}\right)^{-1}\nonumber\\
&\overset{(a)}{=}\left(\sinc^{2}\left(\frac{L_y}{2\pi}b\right)+\frac{\sigma^2_n}{P|c_0\bar{k}_1|^2}\right)^{-1},
\end{align}
where $a_2=\frac{\kappa}{2}\Big|\frac{\sin^2(\phi_1)}{d_1'}-\frac{\sin^2(\phi_2)}{d_2'}\Big|$ and $b=\kappa(\cos(\phi_1)-\cos(\phi_2))$.
Moreover, $d_1'$ and $d_2'$ denote the \gls{MU} distances relevant for the \gls{nLOS} contributions. Approximation $(a)$ holds under the condition $a_2\ll b$. As observed from \eqref{eq: SINR 4 2}, increasing either $b$ or $L_y$ reduces the interference term. Exploiting the \gls{nLOS} path increases the (virtual) spatial separation between the \glspl{MU}, which in turn leads to a larger value of $b$. 

For practical parameter settings, the interference term in~\eqref{eq: SINR 4 2} decays faster than that in~\eqref{eq: SINR 4}. We compare the \glspl{SINR} of the first \gls{MU} in Figs.~\ref{fig: k-distance_d} and~\ref{fig: k-distance_d1}, which illustrate the trade-off between exploiting the \gls{nLOS} path and relying solely on the \gls{LOS} path. The former may be weaker in power but provides higher spatial resolution. As can be seen from this figure, exploiting \gls{nLOS} paths leads to a higher \gls{SINR} in most cases for the adopted parameters. This specific observation, while limited to this configuration, motivates a broader investigation into whether \gls{nLOS} paths can also be beneficial in more general settings, e.g., 3D scenario, discussed in Section~\ref{Performance Evaluation}.


\begin{figure}[t]
\centering
\begin{subfigure}{0.4\textwidth}
\includegraphics[width=\textwidth,height=0.55\textwidth]{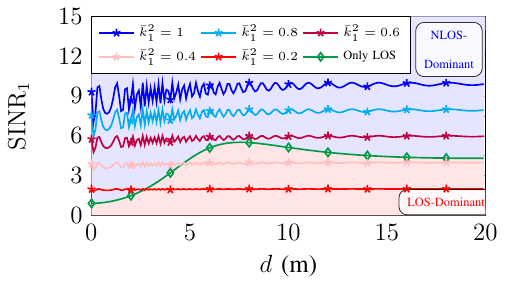}
\caption{$\SINR_1$ versus the distance between the two \glspl{MU} when $d_1=5\sqrt{2}$.}
\label{fig: k-distance_d}
\end{subfigure}
\begin{subfigure}{0.4\textwidth}
\includegraphics[width=\textwidth,height=0.55\textwidth]{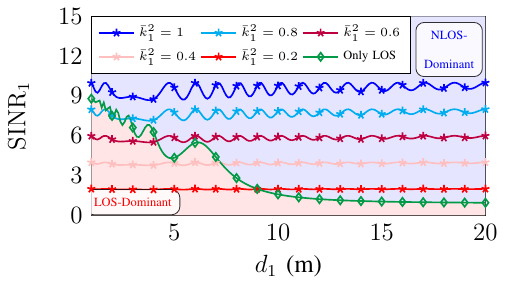}
    \caption{$\SINR_1$ versus the distance of BS to the first \gls{MU} and $d=5$.}
    \label{fig: k-distance_d1}
\end{subfigure}
\caption{SINR of the first \gls{MU} for different \glspl{MU} locations where the distance of the \gls{NS} and second \gls{MU} is fixed to 1~m, $\frac{\sigma_n^2}{P|c_0|^2}=0.1$, and they are located at the same angular position from the BS.}
\vspace{-0.5cm}
\label{fig: k-distance}
\end{figure}

\section{Performance Evaluation}
\label{Performance Evaluation}
First, we validate the proposed channel models through simulations based on the numerical evaluation of the \gls{HF} integral. Then, we investigate a practical scenario where a \gls{BS} serves two \glspl{MU} through both \gls{LOS} and \gls{nLOS} paths.
\subsection{Model Verification}
\label{Simulation Result}
\textbf{Simulation Setup:}
We consider a $3\times3$~m$^2$ \gls{NS} lying in the $\x-\y$ plane, where the surface height variation $Z$ follows a Gaussian distribution, i.e., $Z\sim\Nset(0,\sigma_z^2)$. The carrier frequency is set to $28$~GHz, i.e., $\lambda\approx1.1~$cm. At this frequency, the far-field region of the \gls{NS} begins at $\frac{2D^2}{\lambda} = 3288$~m, where $D$ denotes the largest dimension of the \gls{NS}. Moreover, a single \gls{Tx} is located at coordinates $(0, 0, 90)$, and we consider a single \gls{Rx} whose location varies and is specified individually in each corresponding figure. For all considered scenarios, we assume a passivity factor of $\zeta = 1$ for the \gls{NS}. In the following, we investigate several aspects to validate the accuracy and applicability of the proposed \gls{NF} \gls{MIMO} channel model.

\textbf{Verification of $\Ex\{c_r\bH_r^\refl\}$ and $\Ex\{|c_r|\}$ in Theorem~\ref{Theorem Gaussian distribution} and (\ref{eq: stochastic power}):} Fig.~\ref{fig:3regimes} illustrates a single realization of the real and imaginary components, \(\Re/\Im\{c_r[\bH_r^\refl]_{m,n}\}\), as well as magnitude $|c_r|$ and the average (Avg.) over 100 normalized samples. These results are obtained by numerically evaluating the \gls{HF} integral in \eqref{eq: Huygen}, and are shown using solid lines. In addition, the theoretical predictions based on Theorem~\ref{Theorem Gaussian distribution} and \eqref{eq: total power} are indicated using square markers. Both the \gls{HF} integral-based and the theoretical results are normalized by the deterministic component when $g=0$, i.e., $\bar{c}_{r}(0)$. The curve corresponding to $\Ex\{|c_r|\}$ is computed via \eqref{eq: total power}. As shown in Fig.~\ref{fig:3regimes}, the averaged results obtained from the \gls{HF} integral (solid lines) are in perfect agreement with the proposed theoretical predictions (square markers). Specifically, the averages of the real and imaginary components of $\frac{c_r[\bH_r^\refl]_{m,n}}{\bar{c}_{r}(0)}$, computed via the \gls{HF} integral in \eqref{eq: Huygen}, follow a decay slope of $\e^{-\frac{g}{2}}$, consistent with the theoretical expression $\frac{\bar{c}_{r}(g)}{\bar{c}_{r}(0)}$ in \eqref{eq:coherent component a}. Furthermore, the average value of $\frac{|c_r|}{\bar{c}_{r}(0)}$ obtained from the \gls{HF} integral converges asymptotically to $\frac{|c_{n,r,+\infty}|}{|\bar{c}_{r}(0)|}$ as $\kappa\sigma_z$ becomes sufficiently large.

\textbf{Verification of the PDF of the Elements of $\tilde{\bH}_{r}$:} Fig.~\ref{fig:distribution} presents the \glspl{PDF} of $\Re/\Im\{c_r[\bH_r^\refl]_{m,n}\}$ for $\kappa\sigma_z = 0,\; 0.5,\; 3$, obtained via numerical evaluation of the \gls{HF} integral in \eqref{eq: Huygen}. As shown, the histograms, computed over multiple realizations of the surface profile $z(x,y)$, closely match the Gaussian distribution for all three regimes. This excellent agreement supports the validity of modeling the reflection coefficients as Gaussian-distributed random variables.

\textbf{Verification of the Correlation of the Elements of $\tilde{\bH}_{r}$ in Corollary \ref{corol: sinc}:} Fig.~\ref{fig:corr} illustrates the spatial correlation between two \gls{Rx} antennas as a function of their distance. As observed, the results derived in Corollary~\ref{corol: sinc} closely approximate those obtained with the \gls{HF} integral. The spatial correlation is influenced by the surface roughness, $\sigma_z$, the angle of view $\theta_c$, and the relative locations of the antennas. We study two specific configurations addressed in Corollary~\ref{corol: sinc}: One where the surface normal vector $\vec{\bn}$ is perpendicular to the antenna displacement vector $\bu_R - \bu_{R_p}$, i.e., $\vec{\bn}\perp\bu_R-\bu_{R_p}$, and another one where $\vec{\bn}$ is parallel to $\bu_R - \bu_{R_a}$, i.e., $\vec{\bn}\parallel\bu_R-\bu_{R_a}$; see Fig.~\ref{fig:mixed2} for illustration. In the first configuration, the correlation decays more rapidly with distance, which is consistent with the theoretical predictions in Corollary~\ref{corol: sinc}. Moreover, consistent with Corollary~\ref{corol: sinc}, we observe that the spatial correlation between the antennas increases as the angle of view, $\theta_c$, decreases.

\textbf{Verification of the impact of length correlation in Theorem~\ref{theorem power regime 3} and \eqref{eq: S function}:} Fig.~\ref{fig:S analysis} presents a single realization and the average channel power gain, both obtained by numerically evaluating the \gls{HF} integral in \eqref{eq: Huygen}. These results are depicted with solid lines. Additionally, theoretical predictions based on Theorem~\ref{theorem power regime 3} and \eqref{eq: S function} are shown using square markers. Both results are normalized by $|\bar{c}_{r}(0)|^2$. As can be seen in Fig.~\ref{fig:S analysis}, the theoretical results are in good agreement with the numerical results. 
\begin{figure}
    \centering
    \centering
\includegraphics[width=0.4\textwidth]{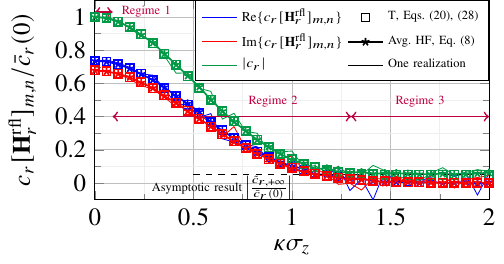}
    \caption{One realization, average, and theoretical results for the real part, the imaginary part, and the absolute value of the channel amplitude $c_r[\bH_r^\refl]_{m,n}$ normalized to ${\bar{c}_{r}(0)}$. Based on the value of $\kappa\sigma_z$, the three different regimes, namely Regime 1 (SR), Regime 2 (transient regime), and Regime 3 (SS), are distinguished. Here, T indicates the \underline{t}heoretical result.}
    \label{fig:3regimes}
\end{figure}

    \begin{figure}[t]
\centering
    \includegraphics[width=0.4\textwidth]{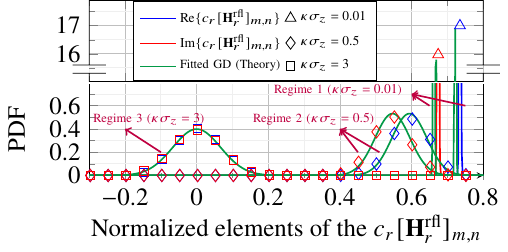}
    \caption{The distribution of the real and imaginary parts of $c_r[\bH_r^\refl]_{m,n}$ obtained with the HF integral in (\ref{eq: Huygen}) normalized by $\bar{c}_{r}(0)$. Corresponding fitted \glspl{GD} are also shown.}
    \label{fig:distribution}
    \vspace{-0.4 cm}
    \end{figure}
    
    \begin{figure}[t]
        \centering
\includegraphics[width=0.4\textwidth]{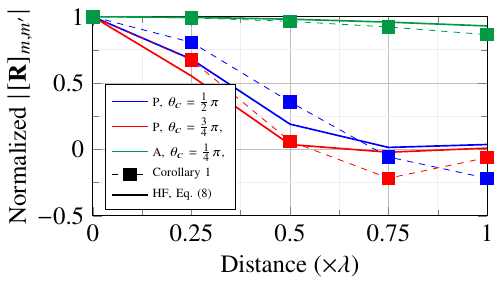}
        \caption{Comparison of the normalized spatial correlation for different positions of the Rx antennas. For conciseness, we use the abbreviation (P, A), indicating \underline{P}erpendicular and \underline{A}ligned Rx antennas w.r.t. $\vec{\bn}$, respectively. For all results, $\kappa\sigma_z=3$ was used.}
        \label{fig:corr}
        \vspace{-0.4 cm}
    \end{figure}

    \begin{figure}[t]
        \centering
        \includegraphics[width=0.4\textwidth]{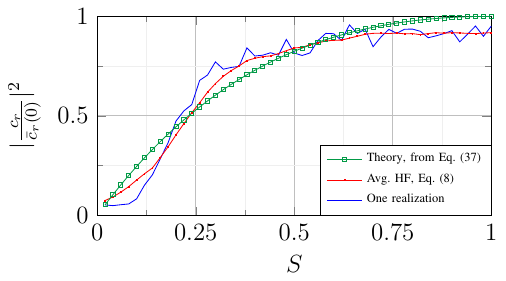}
        \caption{One realization, average, and theoretical results for the channel power gain as a function of $S$.}
        \label{fig:S analysis}
        \vspace{-0.5 cm}
    \end{figure}

\subsection{Impact of Multiple Paths in a Multi-User Scenario}
\label{Impact of nLOS Paths in a Multi-User Scenario}
In Section~\ref{sec: When Exploiting nLOS Paths Is Beneficial?}, we analytically investigated the benefits of exploiting \gls{nLOS} \gls{NF} links for a simple 2D scenario. Here, we present simulation results to show that there is also a benefit for a realistic 3D scenario. We begin by outlining the simulation setup, then present and analyze the corresponding results. 

\textbf{Simulation Setup:} We adopt the scenario depicted in Fig.~\ref{fig:system model}. The setup includes $K = 2$ \glspl{MU} positioned at $[13, -13, -5]$~m and $[11, -11, -5]$~m, respectively. As the \glspl{MU} are located at approximately the same angular position from the \gls{BS}, and following the analysis in Section~\ref{sec: Channel Model for Non-ideal Surface Reflection}, we expect that exploiting the \gls{nLOS} paths enhances the \gls{SINR} for both \glspl{MU}. One reflecting wall in the $\y-\z$ plane is considered with the following configuration: 
$\x=15$~m, $-27$~m $\leq \y\leq -17$~m, $-5$~m $\leq\z\leq5$~m. The passivity factor $\zeta$ is chosen such that the wall causes different values of losses in the specular reflection direction, i.e., $\bar{k}_1=1, 0.6, 0.2$.
The \gls{BS} is positioned at the origin $[0, 0, 0]$~m and consists of a \gls{UPA} with $N_y \times N_z = 400 \times 10$ square elements along the $\y-\z$ axes, respectively. The element spacing for the \gls{BS} is set to half the carrier wavelength. Each \gls{MU} is equipped with a single antenna, i.e., $N_r = 1$. The noise variance is given by $\sigma_n^2 = W N_0 N_{\rm f}$, where $N_0 = -174$~dBm/Hz is the noise power spectral density, $W = 20$~MHz is the bandwidth, and $N_{\rm f} = 6$~dB is the noise figure. 
The center frequency is $60$~GHz, with a reference path loss of $\beta = -68$~dB at $d_0 = 1$~m and path loss exponent $\eta = 2$.

\begin{remk}
The source code used to generate the simulation results is publicly available online at \href{https://github.com/MohamadrezaDelbari/NF-Multipath-MIMO-Channels}{\textcolor{blue}{https://github.com/MohamadrezaDelbari/NF-Multipath-MIMO-Channels}}.
\end{remk}

\textbf{Simulation Result:} First, we verify the accuracy of the key assumption made in Section~\ref{sec: When Exploiting nLOS Paths Is Beneficial?} to simplify analytical deviation. In particular, we assumed that the contribution of the side lobes of the beam reflected from the \gls{BS} is negligible, which is valid for \glspl{BS} with large antenna arrays. In Fig.~\ref{fig:N_prop}, we plot the \gls{SMR} vs. \gls{BS} length $L_y$. As can be seen from this figure, the impact of the side lobes is negligible \gls{wrt} the main lobes for $L_y\geq 15$~cm, which is valid for the adopted system (i.e., $N_y=400$ or  $L_y=100$~cm at 60~GHz).

Fig.~\ref{fig:SINR} illustrates the achievable sum rate versus the \gls{BS} transmit power for a two-\gls{MU} scenario and different loss values for the wall. Linear beamforming is employed at the \gls{BS}, which transmits signals along the \gls{LOS} or \gls{nLOS} paths of the \gls{BS}–\gls{MU} channel. To show the benefits of exploiting \gls{nLOS} paths for beamforming design, we compare the two scenarios we considered in Section~\ref{sec: When Exploiting nLOS Paths Is Beneficial?}. From Fig.~\ref{fig:SINR}, we observe that, in the \gls{NF} regime, the large aperture of the \gls{BS} allows it to simultaneously focus energy toward both \glspl{MU}, even though they are located in nearly the same direction relative to the \gls{BS}. An additional gain is achieved when the beamforming exploits the \gls{nLOS} paths. This improvement is due to the spatial diversity provided by the multipaths. As expected from Section~\ref{sec: When Exploiting nLOS Paths Is Beneficial?}, at low power levels, exploiting only the \gls{LOS} component is preferable. On the other hand, when the transmit power is sufficiently high, exploiting the \gls{nLOS} paths enhances the \glspl{SINR} of both \glspl{MU}, based on the discussion in Section~\ref{sec: Channel Model for Non-ideal Surface Reflection}, see \eqref{eq: SINR 4} and \eqref{eq: SINR 4 2}. This leads to a higher sum-rate performance. In summary, Fig.~\ref{fig:SINR} highlights the critical role of exploiting the \gls{nLOS} components created by the \gls{NS} in enhancing performance for multi-user scenarios.

\begin{figure}[t]
    \centering
    \includegraphics[width=0.4\textwidth]{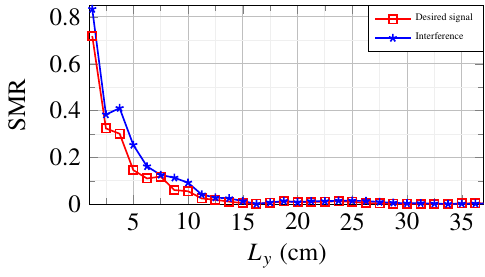}
    \caption{Side lobe to main lobe ratio (SMR) for the desired and interference signals vs. the \gls{BS} length ($L_y$).}
    \label{fig:N_prop}
    \vspace{-0.3cm}
\end{figure}

\begin{figure}[t]
    \centering
    \includegraphics[width=0.4\textwidth]{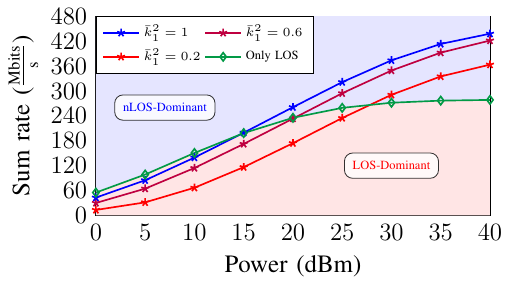}
    \caption{Sum rate vs. transmit power for $K=2$.}
    \label{fig:SINR}
    \vspace{-0.5cm}
\end{figure}

\section{Conclusion}
\label{sec: Conclusion}
In this paper, we have introduced a novel \gls{NF} \gls{MIMO} channel model that explicitly incorporates reflections from extended \glspl{NS}, in addition to the conventional \gls{LOS} path and point scatterers. This allows a more accurate representation of realistic propagation in \gls{NF} settings. We analyzed the channel's stochastic properties and its dependence on the physical characteristics of the \glspl{NS}, such as roughness variance and surface-length correlation. The accuracy of the proposed model was rigorously validated against numerical evaluations of the \gls{HF} integral. Furthermore, we derived analytical expressions that provide insight into how \glspl{NS} contribute to the \gls{SINR} in multi-user scenarios. Based on it, our simulation results quantified the conditions under which \gls{nLOS} paths are beneficial for achieving multiplexing gain. Our analysis confirms that even though these \gls{nLOS} paths are weaker than the \gls{LOS} path, they can improve system performance with \gls{SDMA} in a multiuser scenario. The generality of our framework allows its application to other \gls{NF} \gls{MIMO} scenarios, such as those including \gls{RIS} \cite{Basar2024}, for extended target sensing in the \gls{NF} regime or for extremely large \gls{MIMO} systems in frequency range 3 (FR3) \cite{xu2025near}. For \gls{RIS} and sensing applications, the potential benefits of exploiting \gls{NF}-\gls{nLOS} paths must be carefully analyzed due to the underlying (double) path loss effect, an investigation we leave for future work.

\ifarxiv

\appendices

\section{Proof of Lemma~\ref{lem: tworxtx}}
\label{app: tworxtx}
Consider (\ref{eq:correlation1}) and decouple it into the following two cases:\\
\textbf{Case 1 ($\bu'\neq\bu$):} For two different points $\bu$ and $\bu'$ on the \gls{NS},
\begin{align}
\label{eq:correlation2}
[\bR]_{(n,m,n',m')}&=\!\frac{1}{|\Uset|}\!\iint\limits_{\bu\in\Uset}\iint\limits_{\bu'\in\Uset}\!\Ex\Big\{ I(\bu,\bu_{\tx,n},\bu_{\rx,m})\Big\}\nonumber\\
&\times\Ex\Big\{ I^*\!(\bu',\bu_{\tx,n'},\bu_{\rx,m'})\Big\}\dd x'\!\dd y'\dd x\dd y,
\end{align}
resulting in $[\bR]_{(n,m,n',m')}=0$ since each expectation inside the double integrals is zero in Regime 3.\\
\textbf{Case 2 ($\bu'=\bu$):} First, let us expand $\|\bu-\bu_{\rx,m'}\|$ in terms of $\|\bu-\bu_{\rx,m}\|$. Thereby, $\|\bu-\bu_{\rx,m'}\|$ is equal to the following expression:
\begin{equation}\label{eq:29}
    \|\bu-\bu_{\rx,m}\|+\|\bu_{\rx,m}-\bu_{\rx,m'}\|\sin(\theta_\rx^l)+\bigo(\|\bu_{\rx,m}-\bu_{\rx,m'}\|^2).
\end{equation}
Note that, based on Assumption 1, we can neglect the last term in~\eqref{eq:29}. By substituting \eqref{eq:29} into \eqref{eq:correlation1}, all terms involving $\bu$ cancel out. As a result, the expression inside the expectation becomes deterministic, and the expectation operator can be omitted. A similar expansion is possible for the Tx side by substituting indices $\rx$ and $m$ with $\tx$ and $n$, respectively. Thus, the expression for $[\bR]_{(n,m,n',m')}$ simplifies as follows:
\begin{align}
 &[\bR]_{(n,m,n',m')}=\frac{1}{|\Uset|}\\&\!\!\times\!\!\!\iint\limits_{\bu\in\Uset} \e^{\jj \kappa(\|\bu_{\rx,m}-\bu_{\rx,m'}\|\sin(\theta_\rx^l)+\|\bu_{\tx,n}-\bu_{\tx,n'}\|\sin(\theta_\tx^l))}\dd x\dd y.\nonumber
\end{align}
This concludes the proof.

\section{Proof of Proposition~\ref{prop: sinc}}
\label{app: sinc}
    We begin by substituting $\bu_{\tx,n}=\bu_{\tx,n'}$ in Lemma~\ref{lem: tworxtx} and omit indices $\rx$ and $l$ for simplicity. By transforming the integral in \eqref{eq: lemma 1} from Cartesian to spherical coordinates (i.e., $\dd x\dd y=\cos(\theta)\dd\theta\dd\phi$), we obtain the following expression:
    \begin{equation}
    [\bR]_{m,m'}=\frac{1}{|\Uset|}\iint\limits_{(\phi,\theta)\in(\Phi,\Theta)}\!\!\!\! \e^{\jj \kappa\|\bu_{m}-\bu_{m'}\|\sin(\theta)}\cos(\theta)\dd\theta\dd\phi,
\end{equation}
where $(\Phi,\Theta)=\{(\phi,\theta):\phi_1\leq\phi\leq\phi_2,\theta_1\leq\theta\leq\theta_2\}$. By substituting $|\Uset|=(\phi_2-\phi_1)(\sin(\theta_2)-\sin(\theta_1))$ and continuing the integral calculation, the following result is obtained: $|[\bR]_{m,m'}|$
\begin{align} 
=&\Big|\int _{\phi_1}^{\phi_2} \int _{\theta_1}^{\theta_2} \frac{\e^{ \textsf {j}\frac {2\pi }{\lambda } \| \bu_{m'} - \bu_{m} \| \sin (\theta) }}{(\phi_2-\phi_1)(\sin(\theta_2)-\sin(\theta_1))} \cos(\theta)\dd\theta \dd\phi\Big|\nonumber \\ 
=&\Big|\e^{\jj \frac{\sin(\theta_1)+\sin(\theta_2)}{2}}\frac {\sin \left ({\frac {2\pi }{\lambda } \| \bu_{m'} - \bu_{m} \|\frac{\sin(\theta_2)-\sin(\theta_1)}{2}}\right) }{ \frac {2\pi }{\lambda } \| \bu_{m'} - \bu_{m} \|\frac{\sin(\theta_2)-\sin(\theta_1)}{2}}\Big|\nonumber\\=&\sinc\Big(\frac{2d}{\lambda}\cos\left(\frac{\theta_2+\theta_1}{2}\right)\sin\left(\frac{\theta_2-\theta_1}{2}\right)\Big).
\end{align}
This completes the proof.

\section{Proof of Lemma~\ref{lem: tworxtx correlation}}
\label{app: tworxtx correlation}
In Regime 3,  $|\bar{c}_{r}(g)|^2\to0$ in \eqref{eq:correlation1 joint}. In addition, when $C(\rho)\approx0$, e.g., for large $\rho$, \eqref{eq:correlation1 joint} is zero in Regime 3. According to Assumption 2, when the quadratic terms of 
$\rho$, i.e., $\frac{\rho^2\sin^2(\Psi_\tx)}{2\|\bu_{\tx,n'}-\bu_\rho\|}$ and $\frac{\rho^2\sin^2(\Psi_\rx)}{2\|\bu_{\rx,m'}-\bu_\rho\|}$, are non-negligible, $C(\rho)$ is approximately zero and thus \eqref{eq:correlation1 joint} is zero in Regime 3. Therefore, we can omit the quadratic terms for further calculation and have
\begin{align}
\label{eq:correlation1 joint 2}
    &\Cov\{c_r[\bH_r^\refl]_{n,m},c_r^*[\bH_r^\refl]_{n',m'}^*\}\nonumber\\
&=|c_I|^2\times\!\!\iint\limits_{\!\!\!\!\bu\in\Uset}\!\!\iint\limits_{\bu'\in\Uset}\! \e^{-\sigma_z^2\kappa_z^2(1-C(\rho))+\jj\kappa \tilde{F}}\dd x\dd y\dd x'\dd y',
\end{align}
where $\tilde{F}=\| \bu_{\tx,n}-\bu_\rho\|-\| \bu_{\tx,n'}-\bu_\rho\|-\rho\cos(\Psi_\tx)+\| \bu_{\rx,m}-\bu_\rho\|\!-\| \bu_{\rx,m'}-\bu_\rho\|\!-\rho\cos(\Psi_\rx)$. Equation~\eqref{eq:correlation1 joint 2} can be written as
\begin{align}
\label{eq:correlation1 joint 3}
    &\Cov\{c_r[\bH_r^\refl]_{n,m},c_r^*[\bH_r^\refl]_{n',m'}^*\}=|c_I|^2\nonumber\\
\times&\!\!\iint\limits_{\!\!\!\!\bu\in\Uset}\e^{\jj\kappa(\| \bu_{\tx,n}-\bu_\rho\|-\| \bu_{\tx,n'}-\bu_\rho\|+\| \bu_{\rx,m}-\bu_\rho\|\!-\| \bu_{\rx,m'}-\bu_\rho\|)}\nonumber\\
&\!\!\underbrace{\iint\limits_{\bu'\in\Uset}\! \e^{-\sigma_z^2\kappa_z^2(1-C(\rho))-\jj\kappa\rho(\cos(\Psi_\tx)+\cos(\Psi_\rx))}\dd x'\dd y'}_{\text{Not a function of $\bu_{q_1,q_2}$, $q_1=\{\tx,\rx\}$ and $q_2=\{n,m\}$}}\dd x\dd y.
\end{align}
As shown in \eqref{eq:correlation1 joint 3}, the expression consists of a product of two integrals. When computing the spatial correlation, only the normalized form is relevant, so the second integral, which is a constant, can be omitted. The first integral follows the same structure and derivation steps as for in Case~2 in Appendix~\ref{app: tworxtx}, and thus the same procedure can be applied. This concludes the proof.

\section{Proof of Theorem~\ref{theorem power regime 3}}
\label{app: power regime 3}
Substituting $\bu_{\tx,n'}=\bu_{\tx,n}$ and $\bu_{\rx,m'}=\bu_{\rx,m}$ in \eqref{eq:correlation1 joint 3} removes the first part inside the integrals. Performing a change of variables $x'-x=\rho\cos(\phi)$ and $y'-y=\rho\sin(\phi)$, where $\phi$ is the azimuth angle in the $\x-\y$ plane, $\cos(\Psi_\tx)$ and $\cos(\Psi_\rx)$ can be redefined as $\cos(\Psi_\tx)=a_\tx\cos(\phi-\phi_\tx)$ and $\cos(\Psi_\rx)=a_\rx\cos(\phi-\phi_\rx)$, where $a_\tx$ and $\phi_\tx$ ($a_\rx$ and $\phi_\rx$) are fixed and can be derived based on $\bu_{\tx,n}$ ($\bu_{\rx,m}$).  Let us first expand $a_\rx\cos(\phi-\phi_\rx)$ \gls{wrt} $\cos(\phi-\phi_\tx)$ as follows:
\begin{align}
    &a_\rx\cos(\phi-\phi_\rx)=a_\rx\cos(\phi-\phi_\tx+\phi_\tx-\phi_\rx)\nonumber\\
    =&a_\rx\cos(\phi-\phi_\tx)\cos(\phi_c)-a_\rx\sin(\phi-\phi_\tx)\sin(\phi_c),
\end{align}
where $\phi_c=\phi_\tx-\phi_\rx$ is fixed. Under this condition, the following result holds:
\begin{align}
    &\cos(\Psi_\tx)+\cos(\Psi_\rx)=a_\tx\cos(\phi-\phi_\tx)+a_\rx\cos(\phi-\phi_\rx)\nonumber\\
    =&\big(a_\tx+a_\rx\cos(\phi_c)\big)\cos(\phi-\phi_\tx)-a_\rx\sin(\phi-\phi_\tx)\sin(\phi_c)\nonumber\\
    =&A\cos(\phi-\phi_\tx-\phi_\phi),
    \label{eq: final result cosine sine}
\end{align}
where $A=\sqrt{\big(a_\tx+a_\rx\cos(\phi_c)\big)^2+a_\rx^2}$ and $\phi_\phi=\arctan(\frac{a_\rx}{a_\tx+a_\rx\cos(\phi_c)})$. Substituting \eqref{eq: final result cosine sine} into \eqref{eq:correlation1 joint 3}, $\kappa_\rho\defeq\kappa A$, and 
$\dd x'\dd y'=\rho\dd\phi\dd\rho$ yields
\begin{align}
\label{eq:correlation1 joint 4}
    &\Cov\{c_r[\bH_r^\refl]_{n,m},c_r^*[\bH_r^\refl]_{n,m}^*\}=\Ex\{|c_r|^2\}=|c_I|^2\nonumber\\
\times&\!\!\iint\limits_{\!\!\!\!\bu\in\Uset}\!\!\int\limits_{\rho}\!\int\limits_{\phi=0}^{2\pi}\! \e^{-\sigma_z^2\kappa_z^2(1-C(\rho))-\jj \kappa_\rho\rho\cos(\phi-\phi_\tx-\phi_\phi)}\rho\dd\phi\dd\rho\dd x\dd y\nonumber\\
&=2\pi|c_I|^2\!\!\iint\limits_{\!\!\!\!\bu\in\Uset}\!\!\int\limits_{\rho} \e^{-\sigma_z^2\kappa_z^2(1-C(\rho))}J_0(\kappa_\rho\rho)\rho\dd\rho\dd x\dd y,
\end{align}
where $J_0(\cdot)$ denotes the Bessel function of the first kind and zeroth order. By substituting $1-C(\rho)\approx \frac{\rho^2}{\ell^2}$ in \eqref{eq:correlation1 joint 4}, we obtain
\begin{align}
\label{eq:correlation1 joint 5}
\Ex\{|c_r|^2\}=&2\pi|c_I|^2\!\!\iint\limits_{\!\!\!\!\bu\in\Uset}\!\!\int\limits_{\rho=0}^{+\infty} \!\!\!\e^{-\sigma_z^2\kappa_z^2\frac{\rho^2}{\ell^2}}J_0(\kappa_\rho\rho)\rho\dd\rho\dd x\dd y\nonumber\\
\overset{(a)}{=}&\frac{B\ell^2}{2\sigma_z^2\kappa_z^2}\e^{-\frac{(\kappa_\rho\ell)^2}{(2\kappa_z\sigma_z)^2}}\!\!,
\end{align}
where $B$ is a constant, and $(a)$ holds when $\ell$ is not too large; otherwise, the integral diverges. As $\ell$ increases, we expect $\Ex\{|c_r|^2\}$ to increase as well. However, the expression in \eqref{eq:correlation1 joint 5} is not monotonically increasing \gls{wrt} $\ell$. Therefore, we define $\ell_{\max}=2\sigma_z \frac{\kappa_z}{\kappa_\rho}$, since the function increases up to this point. To determine $\ell_{\max}$, we computed the derivative of the function \gls{wrt} $\ell$ and set it to zero. The only positive root is given by $\ell = 2\sigma_z \frac{\kappa_z}{\kappa_\rho}$. A detailed derivation is omitted due to space constraints. This concludes the proof.
\fi

\bibliographystyle{IEEEtran}
\bibliography{References}

\end{document}